\documentclass[a4paper]{article}
\NeedsTeXFormat{LaTeX2e}
\usepackage[cp1251]{inputenc}
\usepackage[T2A]{fontenc}
\usepackage[english]{babel}

\usepackage{ccaption}
\captiondelim{. }

\usepackage[dvips]{graphicx}

\usepackage{amsmath,amssymb,amsthm}
\usepackage{hyperref}

\usepackage[top=20mm,right=20mm,left=20mm,bottom=20mm]{geometry}

\def\ds{\displaystyle}
\def\bbR{\mathbb{R}}
\def\gs{\geqslant}
\def\ls{\leqslant}
\def\ds{\displaystyle}
\def\mfT{{J}}
\def\vm{m}

\theoremstyle{plain}
\newtheorem{propos}{Proposition}

\begin{document}


\title{Phase topology of one integrable case of the rigid body motion%
\footnote{Submitted on August 17, 1977.}}

\author{M.P. Kharlamov\footnote{Donetsk Physico-technical Institute.}}

\date{}

\maketitle

\begin{center}
{\bf \textit{Mekh. Tverd. Tela} (Russian Journal ``Mechanics of Rigid Body''),\\
1979, No. 11, pp. 50--64.}

\vspace{5mm}

\href{http://www.ics.org.ru/doc?pdf=1156\&dir=r}{http://www.ics.org.ru}

\vspace{2mm}

\href{https://www.researchgate.net/publication/252860538\_Phase\_topology\_of\_one\_integrable\_case\_of\_the\_motion\_of\_a\_rigid\_body?ev=prf\_pub}{https://www.researchgate.net}

\end{center}

\vspace{3mm}

{
\small The paper is devoted to the investigation of the motion of a rigid body fixed at its mass center under the influence of a central Newtonian field with the condition that the constant of the area integral is zero.

\vspace{3mm}

}

Consider a body fixed by its mass center $O$ at the origin of an immovable in space reference frame $O\vm_1\vm_2\vm_3$ in such a way that the orth $\vm_1$ is directed to the center of the gravitational attraction. We attach to the body the basis $e_1e_2e_3$ of the unit direction vectors of the central principal inertia axes. The corresponding inertia moments are denoted by $I_1, I_2, I_3$ and the inertia ellipsoid is considered triaxial:
\begin{equation}\label{neq1}
  I_1 < I_2 < I_3.
\end{equation}

Let $\alpha_1, \alpha_2, \alpha_3$ be the components of $\vm_1$ in the moving basis,
\begin{equation}\label{neq2}
  \alpha_1^2+\alpha_2^2+\alpha_3^2=1,
\end{equation}
and $\omega_1, \omega_2, \omega_3$ the projections onto the moving axes of the instant angular velocity vector.

The potential energy $V$ of the considered system is preserved by the rotations of the body around the axis $O\vm_1$, consequently (see \cite{bib05}) $V=\tilde{V} \circ p$, where $\tilde{V} = \tilde{V}(\alpha_1, \alpha_2, \alpha_3)$ is a function on the sphere \eqref{neq2} and $p : SO(3) \rightarrow S^2$ is the projection of the configuration space onto the Poisson sphere \eqref{neq2}; $p$ takes the direction cosines matrix of the immovable basis with respect to the movable one to its first row. The generally accepted approximation of the potential is
\begin{equation}\label{neq3}
  \tilde{V}=\frac{\varepsilon^2}{2I_1I_2I_3}(I_1\alpha_1^2+I_2\alpha_2^2+I_3\alpha_3^2).
\end{equation}
The parameter $\varepsilon \neq 0$ depends on the interaction characteristics. It can be excluded by a linear change of the coordinates and the time, therefore we put $\varepsilon = 1$.

In addition to the classical integrals of the energy
\begin{equation}\label{neq4}
  H=\frac{1}{2}(I_1\omega_1^2+I_2\omega_2^2+I_3\omega_3^2)+\frac{1}{2I_1I_2I_3}(I_1\alpha_1^2+I_2\alpha_2^2+I_3\alpha_3^2)
\end{equation}
and momentum
\begin{equation*}
  L=I_1\alpha_1\omega_1+I_2\alpha_2\omega_2+I_3\alpha_3\omega_3,
\end{equation*}
the system admits the following quadratic integral
\begin{equation}\label{neq5}
  F=\frac{1}{2}(I_1\omega_1^2+I_2\omega_2^2+I_3\omega_3^2)-\frac{1}{2I_1I_2I_3}(I_2I_3\alpha_1^2+I_3I_1\alpha_2^2+I_1I_2\alpha_3^2).
\end{equation}
This integral was first pointed out by A.\,Clebsch \cite{bib09} for the problem of the motion of a rigid body in fluid; then $\alpha_1, \alpha_2, \alpha_3$ played the role of the impulse force components. Later this integral was mentioned in the work of F.\,Tisserand \cite{bib10} considering this problem as describing the motion of a rigid body on the Earth surface (though Tisserand did not take into account the centrifugal forces restricting himself to the Coriolis forces).

Obviously, the function \eqref{neq5} is preserved by the transformations of the phase space generated by rotations around the $O\vm_1$ axis. Consequently, it is in involution with the momentum integral in the Lagrangian symplectic structure on $T(SO(3))$. Therefore the mechanical system describing the rigid body motion about its mass center in the field \eqref{neq3} is completely integrable \cite{bib02}. In the general case the analytical solution of the problem is complicated. It is based on finding the roots of the fourth degree polynomial \cite{bib07} and does not give a clear picture of motion. We investigate a partial case, namely, those motions of the rigid body which satisfy the equality
\begin{equation}\label{neq6}
  L=0.
\end{equation}

S.A.\,Chaplygin \cite{bib08} was the first who pointed out that under condition \eqref{neq6} the variables separate and he gave the generating function for the corresponding canonical transformation. Under the same condition in the work \cite{bib07} the geometrical interpretation of motions was given generalizing that of Poinsot. Nevertheless, the description of the types of motions has not been given up-to-date. In what follows, we complete this task.

{\bf Separation of variables.} To solve the problem, we will use the reduced system. According to \cite{bib06, bib07}, the reduced system is a natural mechanical system on the Poisson sphere. The functions \eqref{neq4} and \eqref{neq5} induce the first integrals $\tilde{H}$ and $\tilde{F}$ of the reduced system; these integrals are in involution with respect to the Lagrangian symplectic structure on $T(S^2)$ defined by the reduced metric (see \cite{bib05}):
\begin{equation}\label{neq7}
  \{\tilde{H}, \tilde{F}\} \equiv 0.
\end{equation}

Using equations (68) of the work \cite{bib05}, we write the integrals $\tilde{H}$ and $\tilde{F}$ in the form
\begin{eqnarray}
& &  2\tilde{H} = \ds{\frac{a_1\dot{\alpha}_1^2 + a_2\dot{\alpha}_2^2 + a_3\dot{\alpha}_3^2}{a_2a_3\alpha_1^2 + a_3a_1\alpha_2^2 + a_1a_2\alpha_3^2}} + a_2a_3\alpha_1^2 + a_3a_1\alpha_2^2 + a_1a_2\alpha_3^2 = h, \label{neq8} \\
& &  2\tilde{F} = \ds{\frac{a_1^2\dot{\alpha}_1^2 + a_2^2\dot{\alpha}_2^2 + a_3^2\dot{\alpha}_3^2 - \left(a_1\alpha_1\dot{\alpha}_1 + a_2\alpha_2\dot{\alpha}_2 + a_3\alpha_3\dot{\alpha}_3\right)^2}{\left(a_2a_3\alpha_1^2 + a_3a_1\alpha_2^2 + a_1a_2\alpha_3^2\right)^2}} - \label{neq9} \\
& & \phantom{2\tilde{F} =}  - \left(a_1\alpha_1^2 + a_2\alpha_2^2 + a_3\alpha_3^2\right) = f,\nonumber
\end{eqnarray}
where $h, f$ are the corresponding constants and
\begin{equation}\label{neq10}
  a_1 = \frac{1}{I_1}, \quad a_2 = \frac{1}{I_2}, \quad a_3 = \frac{1}{I_3}.
\end{equation}

The quadratic relations \eqref{neq8}, \eqref{neq9} guarantee that the reduced system belongs to the Liouville type \cite{bib03}, hence the separation of variables can be obtained in a purely algebraic way. Let us introduce the elliptic coordinates $\lambda, \mu$ by the formulas
\begin{equation}\label{neq11}
  \alpha_1^2=\frac{(a_1-\lambda)(a_1-\mu)}{(a_1-a_2)(a_1-a_3)}, \quad \alpha_2^2=\frac{(\lambda-a_2)(a_2-\mu)}{(a_1-a_2)(a_2-a_3)}, \quad \alpha_3^2=\frac{(\lambda-a_3)(\mu-a_3)}{(a_1-a_3)(a_2-a_3)}.
\end{equation}
According to \eqref{neq1}, \eqref{neq10} the domain for $\lambda$ and $\mu$ is
\begin{equation}\label{neq12}
  a_3 \ls \mu \ls a_2 \ls \lambda \ls a_1.
\end{equation}
From \eqref{neq8}, \eqref{neq9}, and \eqref{neq11} we obtain
\begin{equation}\label{neq13}
\begin{array}{c}
    \ds{h = \frac{\lambda - \mu}{4\lambda\mu}\left[\frac{\lambda\dot{\lambda}^2}{\varphi(\lambda)} - \frac{\mu\dot{\mu}^2}{\varphi(\mu)} \right] + \lambda\mu}, \quad
    \ds{f = \frac{\lambda - \mu}{4\lambda^2\mu^2}\left[\frac{\lambda^2\dot{\lambda}^2}{\varphi(\lambda)} - \frac{\mu^2\dot{\mu}^2}{\varphi(\mu)} \right] + (\lambda + \mu) - (a_1 + a_2 + a_3),}
\end{array}
\end{equation}
where $\varphi(s) = (a_1 - s)(a_2 - s)(a_3 - s)$ and according to \eqref{neq12} $\varphi(\lambda) \gs 0$, $\varphi(\mu) \ls 0$. Let us put
\begin{equation}\label{neq14}
  \xi=\int\limits_{a_2}^\lambda\sqrt{\frac{\lambda}{\varphi(\lambda)}}d\lambda, \qquad \eta=\int\limits_{a_3}^\mu\sqrt{-\frac{\mu}{\varphi(\mu)}}d\mu.
\end{equation}
Then
\begin{equation*}
  0 \ls \xi \ls m = \int\limits_{a_2}^{a_1}\sqrt{\frac{\lambda}{\varphi(\lambda)}}d\lambda, \quad 0 \ls \eta \ls n = \int\limits_{a_3}^{a_2}\sqrt{-\frac{\mu}{\varphi(\mu)}}d\mu.
\end{equation*}
Let us make the change of variables \eqref{neq14} in \eqref{neq13} and introduce the ``amended time'' $\tau$,
\begin{equation}\label{neq15}
  dt = \frac{\lambda - \mu}{2\lambda\mu}d\tau.
\end{equation}
Solving \eqref{neq13} with respect to $d\xi/d\tau, d\eta/d\tau$, we get
\begin{equation}\label{neq16}
\begin{array}{c}
  \ds{\left( \frac{d\xi}{d\tau} \right)^2 = f - \frac{h}{\lambda(\xi)}+[a_1+a_2+a_3-\lambda(\xi)]}, \quad
  \ds{\left( \frac{d\eta}{d\tau} \right)^2 = \frac{h}{\mu(\eta)} - f -[a_1+a_2+a_3-\mu(\eta)]}.
\end{array}
\end{equation}
Thus, on any fixed integral manifold \eqref{neq8}, \eqref{neq9} the variables are separated and the trajectories of the reduced system are defined by equations \eqref{neq16}.

{\bf Bifurcation diagram.} In the work \cite{bib01}, the investigation of motions in Liouville systems, in particular, in the problem of two immovable centers, is fulfilled. The method of such investigation is based on the study of the system of equations determining the so-called multiple roots curve. In the case considered, the multiple roots curve in fact coincides with the set of the pairs of values of the integrals \eqref{neq8}, \eqref{neq9} for which the functions $\tilde{H}$ and $\tilde{F}$ are dependent.

Let us introduce the integral map
\begin{equation}\label{neq17}
  \mfT=2\tilde{H} \times 2\tilde{F} : T(S^2) \to \bbR^2.
\end{equation}
From relations \eqref{neq8}, \eqref{neq9}, it follows that for all $(h,f) \in \bbR^2$ the sets
\begin{equation}\label{neq18}
  \mfT_{h,f}=\left\{(\nu, \dot{\nu}) \in T(S^2):\mfT(\nu,\dot{\nu})=(h,f) \right\}
\end{equation}
are integral manifolds of the reduced system (typically two-dimensional), i.e., the surfaces built by the trajectories of the corresponding vector field $\tilde{X}_0$ \cite{bib05}.

In most of the cases it is easy to answer the question on the topological structure of the manifold \eqref{neq18} and on the type of motion on it. Let us define the bifurcation set $\Sigma$ as the set of points $(h,f) \in \bbR^2$ over which the map \eqref{neq17} is not locally trivial. Since the levels of the functions $\tilde{H}$ and $\tilde{F}$ are compact sets, the set $\Sigma$ is the set of critical values of the map $\mfT$. Then the Liouville theorem \cite{bib02} and equation \eqref{neq7} yield the following statement.

\begin{propos}\label{prop1} Suppose $(h,f) \in \bbR^2\setminus\Sigma$. Then either $\mfT_{h,f} = \varnothing$, or each connected component of $\mfT_{h,f}$ is diffeomorphic to the two-dimensional torus and the motion on it is quasi-periodic. The corresponding frequencies are found by integrating equations \eqref{neq16}.
\end{propos}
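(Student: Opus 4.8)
The statement is an instance of the Liouville--Arnold theorem \cite{bib02}, so the plan is to verify its hypotheses for the reduced system and then to extract the frequencies from the separated equations \eqref{neq16}. The reduced system is Hamiltonian on the four-dimensional symplectic manifold $T(S^2)$, and by \eqref{neq7} the two integrals $2\tilde{H}$ and $2\tilde{F}$ are in involution. First I would fix $(h,f)\in\bbR^2\setminus\Sigma$ and use that $\Sigma$ is precisely the set of critical values of $\mfT$: this means $(h,f)$ is a regular value, so the differentials of $2\tilde{H}$ and $2\tilde{F}$ are linearly independent at every point of $\mfT_{h,f}$, whence $\mfT_{h,f}$ is a smooth two-dimensional submanifold whenever it is nonempty.

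Next I would upgrade this to the dynamical picture. Because the reduced metric supplies a positive-definite kinetic term over the compact base $S^2$, the energy $\tilde{H}$ is proper, which is exactly why each level $\{2\tilde{H}=h\}$ is compact; being a closed subset of such a level, $\mfT_{h,f}$ is itself compact. Non-degeneracy of the symplectic form then converts the pointwise independence of the differentials into pointwise independence of the Hamiltonian fields $\tilde{X}_0$ associated with $\tilde{H}$ and $\tilde{F}$, and these fields commute by \eqref{neq7} and are complete on the compact level. Thus each connected component of $\mfT_{h,f}$ is a compact connected surface carrying two everywhere-independent, commuting, complete vector fields; the Liouville--Arnold theorem identifies such a component with the torus $T^2$ and linearizes the joint flow in angle coordinates, so the motion is quasi-periodic.

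Finally, for the frequencies I would invoke the separation already obtained. In the amended time $\tau$ of \eqref{neq15} the flow splits into the two one-degree-of-freedom quadratures \eqref{neq16}, in which $\xi$ and $\eta$ each oscillate between consecutive turning points, namely the zeros of the respective right-hand sides. Integrating \eqref{neq16} over one full oscillation of each variable produces the two $\tau$-periods, and passing back to the real time $t$ through \eqref{neq15} yields the pair of frequencies of the quasi-periodic motion on the torus.

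Since the argument is in essence a citation of a standard theorem, there is no deep obstacle; the one point demanding genuine care is this last, frequency step. One must confirm that on a regular level the right-hand sides of \eqref{neq16} are strictly positive on an open interval and vanish only simply at its endpoints, so that each separated variable performs an honest periodic oscillation rather than an unbounded or degenerate motion. This sign and turning-point structure is precisely what guarantees finite, well-defined periods, and establishing it is where the subsequent analysis of the bifurcation set $\Sigma$ does the real work.
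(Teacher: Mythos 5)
Your proposal is correct and follows essentially the same route as the paper, which disposes of the proposition in one sentence: compactness of the levels of $\tilde{H}$ and $\tilde{F}$ identifies $\Sigma$ with the critical values of $\mfT$, and then Arnold's Liouville theorem \cite{bib02} together with the involution \eqref{neq7} gives the torus structure and quasi-periodicity, with frequencies read off from \eqref{neq16}. You have merely spelled out the standard verification of the Liouville--Arnold hypotheses (regular value, compactness via properness of the energy, commuting independent Hamiltonian fields) that the paper leaves implicit.
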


In fact, the values $(h,f) \in \Sigma$ are of the main interest. First, because the corresponding motions are not quasi-periodic and have a non-trivial character. Second, the form of the integral surfaces (not manifolds any more) $\mfT_{h,f}$ defines the transformations in the families of the integral tori.

We will find the set $\Sigma$ with the help of the redundant coordinates $\alpha_1$, $\alpha_2$, $\alpha_3$, $\dot{\alpha}_1$, $\dot{\alpha}_2$, $\dot{\alpha}_3$ on the tangent bundle to the Poisson sphere. Then we have to take into account the relation
\begin{equation}\label{neq19}
  \Gamma=\alpha_1^2+\alpha_2^2+\alpha_3^2=1.
\end{equation}
We also keep in mind the equality
\begin{equation}\label{neq20}
  \alpha_1\dot{\alpha}_1+\alpha_2\dot{\alpha}_2+\alpha_3\dot{\alpha}_3=0,
\end{equation}
but as a restriction it appears to be not essential. Let us compose a function with undefined Lagrange multipliers $\Lambda=\lambda_F\tilde{F} + \lambda_H\tilde{H} + \lambda_\Gamma\tilde{\Gamma}$.
Critical points of the map \eqref{neq17} satisfy equations \eqref{neq19}, \eqref{neq20} and the system
\begin{equation*}
  \Lambda_{\alpha_1}=\Lambda_{\alpha_2}=\Lambda_{\alpha_3}=0,\quad \Lambda_{\dot{\alpha}_1}=\Lambda_{\dot{\alpha}_2}=\Lambda_{\dot{\alpha}_3}=0.
\end{equation*}
In more detailed form, after some simplifying we get
\begin{equation}\label{neq21}
\begin{array}{ll}
  \ds{\frac{\lambda_F}{D}[a_1\dot{\alpha}_1-(a_1\alpha_1\dot{\alpha}_1+a_2\alpha_2\dot{\alpha}_2+a_3\alpha_3\dot{\alpha}_3)]-\lambda_H\dot{\alpha}_1=0,} & \\
  \ds{\lambda_F\frac{a_1\dot{\alpha}_1}{D^2}(a_1\alpha_1\dot{\alpha}_1+a_2\alpha_2\dot{\alpha}_2+a_3\alpha_3\dot{\alpha}_3) - } & (123)\\
\qquad   \ds{-\lambda_H\frac{a_2a_3\alpha_1}{D^3}(a_1\dot{\alpha}_1^2+a_2\dot{\alpha}_2^2+a_3\dot{\alpha}_3^2)+(\lambda_F a_1-\lambda_H a_2a_3)\alpha_1-2\lambda_\Gamma\alpha_1=0}, &
\end{array}
\end{equation}
where $D=a_2a_3\alpha_1^2+a_3a_1\alpha_2^2+a_1a_2\alpha_3^2$, and the symbol $(123)$ shows that the equations not written explicitly here are obtained by the cyclic transpositions of the indices.
Substituting the solutions of system \eqref{neq19} -- \eqref{neq21} into \eqref{neq8}, \eqref{neq9}, we find the set $\Sigma$ as three half-lines
\begin{eqnarray}
  h=a_1(f+a_2+a_3), &{}& f \gs -a_2,\label{neq22}\\
  h=a_2(f+a_3+a_1), &{}& f \gs -a_1,\label{neq23}\\
  h=a_3(f+a_1+a_2), &{}& f \gs -a_3 \label{neq24}
\end{eqnarray}
and the parabolic segment
\begin{equation}\label{neq25}
\begin{array}{cc}
  h=\ds{\frac{(f+a_1+a_2+a_3)^2}{4}}, & a_2-(a_3+a_1) \ls f \ls a_1 - (a_2 + a_3),
\end{array}
\end{equation}
the ends of which are tangent points of the parabola and the half-lines \eqref{neq23}, \eqref{neq22} (see Fig.~\ref{fig_01}).

\begin{figure}[!ht]
\centering
\includegraphics[width=0.35\paperwidth,keepaspectratio]{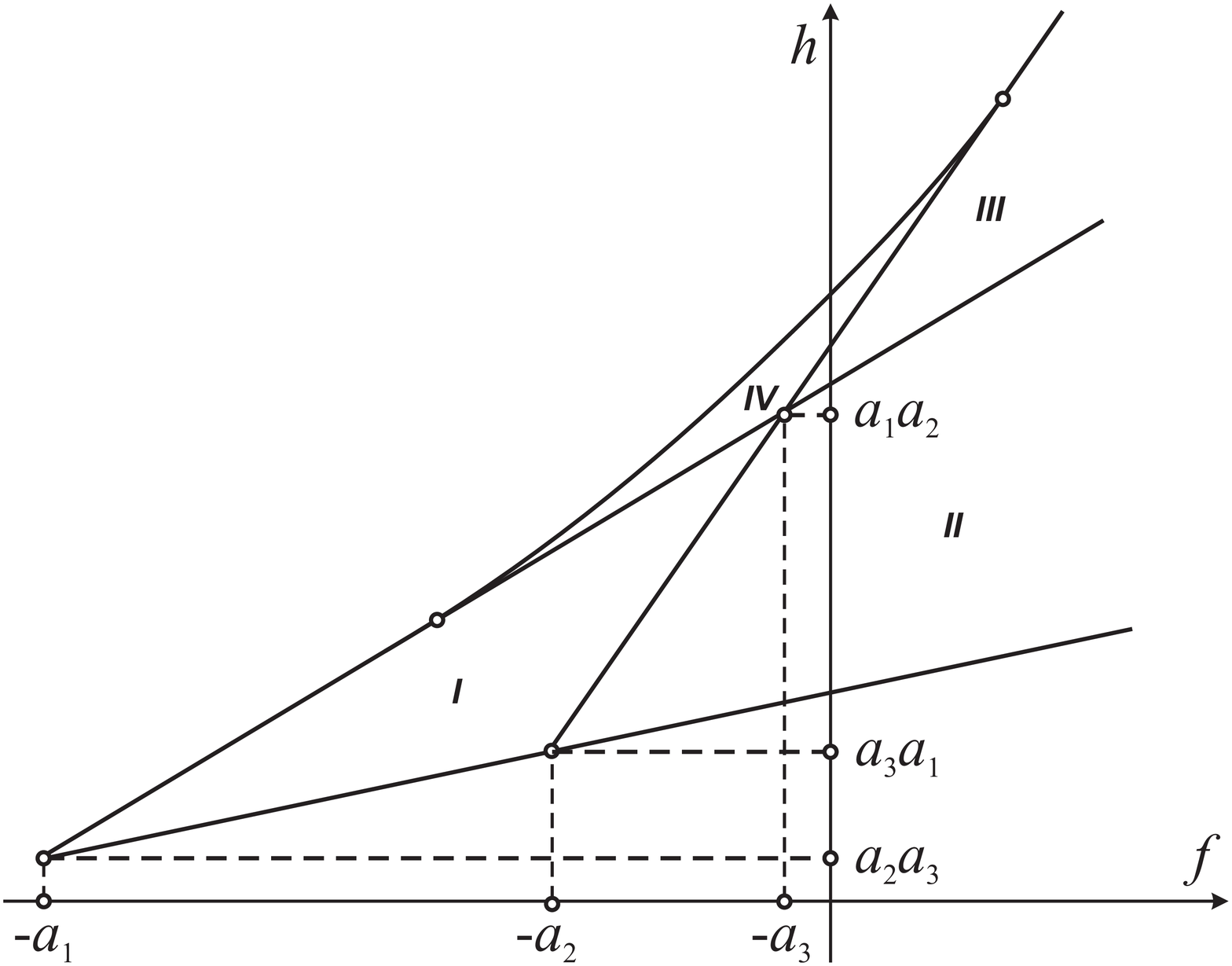}
\caption{}\label{fig_01}
\end{figure}

{\bf Types of critical points and motions.} According to \eqref{neq16}, for the fixed values $h$ and $f$ the coordinates of the vector $\vm_1$ on the Poisson sphere satisfy the inequalities
\begin{equation}\label{neq26}
  \ds{\frac{h}{\lambda}}+\lambda \ls f+a_1+a_2+a_3 \ls \ds{\frac{h}{\mu}}+\mu.
\end{equation}
Therefore the set $\tilde{M}_{h,f} \subset S^2$ defined by inequalities \eqref{neq26} is called the region of possible motions.

Let us denote
\begin{equation}\label{neq27}
\begin{array}{c}
  2f_0=f+a_1+a_2+a_3, \\
  f_1=f+a_2+a_3, \qquad f_2=f+a_3+a_1, \qquad  f_3=f+a_1+a_2,\\
  h_1=a_1 f_1, \qquad h_2=a_2 f_2, \qquad h_3=a_3 f_3.
\end{array}
\end{equation}

We start the investigation of critical cases from the values
\begin{equation}\label{neq28}
  f=-a_1, \qquad  h=a_2a_3.
\end{equation}
Substituting them into \eqref{neq16}, we get
\begin{equation*}
  \left(\frac{d\xi}{d\tau}\right)^2=-\frac{(\lambda-a_2)(\lambda-a_3)}{\lambda}, \qquad \left(\frac{d\eta}{d\tau}\right)^2=-\frac{(a_2-\mu)(\mu-a_3)}{\mu}.
\end{equation*}
Thus according to \eqref{neq12} and \eqref{neq14}, the only solution with \eqref{neq28} is $\xi=\eta\equiv0$. So the set $\mfT_{a_2a_3,-a_1}$ consists of two equilibrium points $(\pm 1,0,0)$.

Let the point $(h, f)$ belong to the segment of \eqref{neq24} with
\begin{equation}\label{neq29}
  h=h_3(f), \qquad  -a_1 < f < -a_2.
\end{equation}
From \eqref{neq16} we have
\begin{equation}\label{neq30}
  \left(\frac{d\xi}{d\tau}\right)^2=\frac{(\lambda-a_3)(f_3-\lambda)}{\lambda}, \qquad \left(\frac{d\eta}{d\tau}\right)^2=-\frac{(\mu-a_3)(f_3-\mu)}{\mu},
\end{equation}
and due to \eqref{neq27} and \eqref{neq29} $a_2<f_3<a_3$. During the motion $\eta \equiv 0$; the corresponding trajectories on the sphere are oscillations along the cross section $\alpha_3=0$ about the points $(\pm 1, 0, 0)$ in the limits $\lambda \ls f_3$. The manifold $\mfT_{h_2,f}$ consists of two non-intersecting circles.

For the values $h=h_3(f)$, $-a_2<f<+\infty$, equations \eqref{neq30} still hold. However in this case $f_3 > a_1$. Again $\eta \equiv 0$, but the trajectories close after passing the whole cross section $\alpha_3  = 0$.

In the intermediate case
\begin{equation}\label{neq31}
  f=-a_2, \qquad  h=a_3a_1
\end{equation}
two equilibrium points $(0,\pm 1,0)$ arise; they cut the cross section $\alpha_3  = 0$ into two segments on both of which the asymptotic motions in two directions take place. The set $\mfT_{a_3a_1,-a_2}$ is emphasized in Fig.~\ref{fig_02}, where the type of the manifolds $\mfT_{h_3,f}$ is shown while crossing the values \eqref{neq31}.

\begin{figure}[!ht]
\centering
\includegraphics[width=0.35\paperwidth,keepaspectratio]{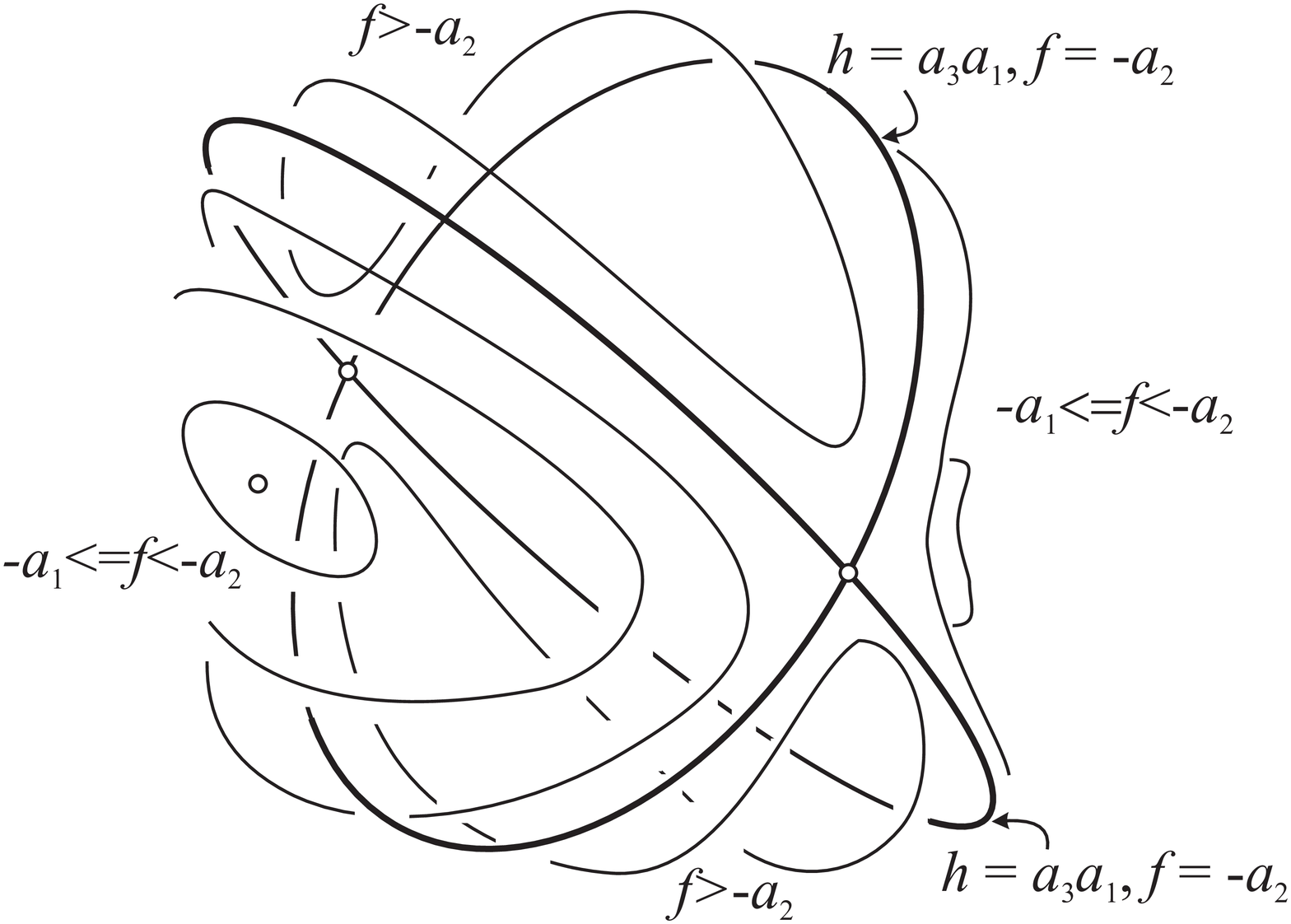}
\caption{}\label{fig_02}
\end{figure}

Suppose that the point $(h,f)$ is on the half-line \eqref{neq22}. Equations \eqref{neq16} take the form
\begin{equation}\label{neq32}
  \left(\frac{d\xi}{d\tau}\right)^2=\frac{(a_1-\lambda)(\lambda-f_1)}{\lambda}, \qquad \left(\frac{d\eta}{d\tau}\right)^2=\frac{(a_1-\mu)(f_1-\mu)}{\mu}.
\end{equation}
Consider the segment of the half-line \eqref{neq22} such that
\begin{equation}\label{neq33}
  -a_2<f<-a_3.
\end{equation}
It follows from \eqref{neq27} that $a_3<f_1<a_2$. The solutions of equations \eqref{neq32} lie in the ring
\begin{equation}\label{neq34}
  \mu \ls f_1.
\end{equation}
Among them there are two periodic oscillations along the segments of the cross section $\alpha_1=0$ cut out by the ring \eqref{neq34}. Let us show that all of them are limit cycles for other trajectories with \eqref{neq22} and \eqref{neq33}. To be definite suppose that such a trajectory at $\tau = 0$ is in the first octant and ${d\xi}/{d\tau}|_{\tau=0} > 0$. Since ${d\xi}/{d\tau}$ does not vanish, this sign in the first equation \eqref{neq32} can change only in the case when the trajectory crosses the parametric curve $\xi=m$ and enters the domain of another chart. But this can happen only at the moment
\begin{equation*}
  \tau=\int\limits_{\xi_0}^m\frac{\sqrt{\lambda}d\xi}{\sqrt{(a_1-\lambda)(\lambda-f_1}}=\int\limits_\lambda^{a_1}\frac{\lambda d\lambda}{(a_1-\lambda)\sqrt{(\lambda-f_1)(\lambda-a_2)(\lambda-a_3)}},
\end{equation*}
while the integral in the right-hand part obviously diverges. Therefore as $\tau \to +\infty$ the considered trajectory asymptotically approaches the cross section $\alpha_1=0$ (Fig.~\ref{fig_03}).

\begin{figure}[!ht]
\centering
\includegraphics[width=0.3\linewidth,keepaspectratio]{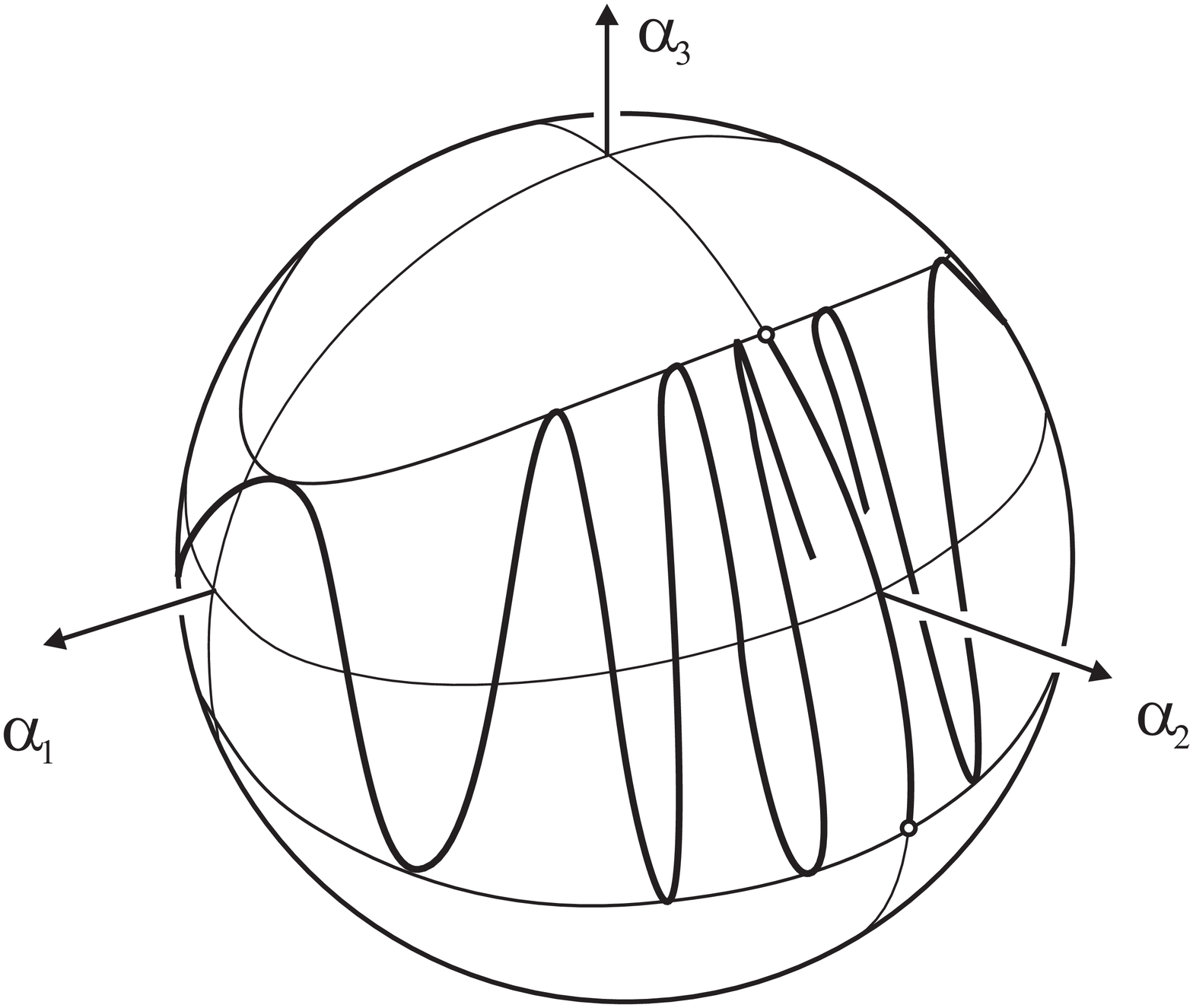}
\caption{}\label{fig_03}
\end{figure}

The family of the trajectories going in the same direction along the ring \eqref{neq34} gives, in the phase space, a torus with two limit cycles (see Fig.~\ref{fig_04}{\it  a}). The similar surface is filled by the trajectories going in the opposite direction (see Fig.~\ref{fig_04}{\it  b}). These two tori forming the set $\mfT_{h_1,f}$ intersect by two circles, namely, the limit cycles (see Fig.~\ref{fig_04}{\it  c}). Now it is easy to understand how this surface degenerates into $\mfT_{a_3a_1,-a_2}$ (see Fig.~\ref{fig_02}) as $f\searrow-a_2$; the limit cycles contract to the equilibrium points and the halves of the tori turn into half-circles which are the parts of the separatrix.

\begin{figure}[!ht]
\centering
\includegraphics[width=0.7\paperwidth,keepaspectratio]{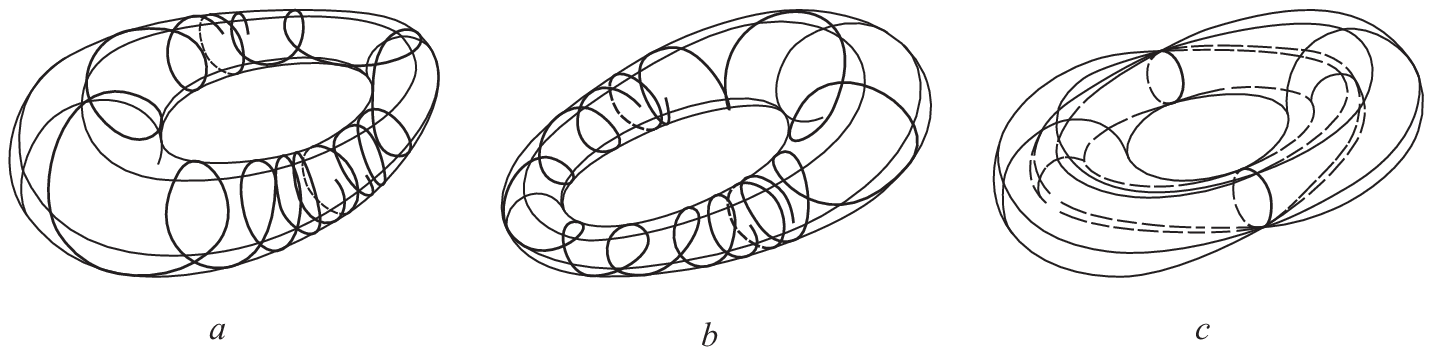}
\caption{}\label{fig_04}
\end{figure}

Let us consider the special point of the half-line \eqref{neq22} in which
\begin{equation}\label{neq35}
  f=-a_3, \qquad h=a_1a_2.
\end{equation}
Here $f_1=a_2$, and \eqref{neq32} yields the existence of two equilibrium points $(0,0,\pm 1)$. The oscillations along the cross section $\alpha_1=0$ about the points $(0,\pm 1,0)$ generate four segments of the separatrix similar to the picture shown in Fig.~\ref{fig_02}.

Motions along the cross section $\alpha_2=0$ are also the solutions of \eqref{neq32} in the case \eqref{neq35}. Any such motion approaches one equilibrium point as $t \to +\infty$ and another one as $t \to -\infty$.

Consider the reduced metric on the Poisson sphere; the square of the vector length is given by the quadratically homogeneous part of the function \eqref{neq8}. Let us denote by $P_1$, $P_2$, $P_3$, $P_4$ the umbilical points of the reduced metric with elliptic coordinates $\lambda = \mu = a_2$ lying respectively in the domains $\{\alpha_1>0, \alpha_3>0\}$, $\{\alpha_1<0, \alpha_3>0\}$, $\{\alpha_1<0, \alpha_3<0\}$, $\{\alpha_1>0, \alpha_3<0\}$.

\begin{propos}\label{prop2}
Each motion corresponding to \eqref{neq35}, except for the motions along the cross sections $\alpha_1=0$ and $\alpha_2 = 0$, crosses exactly one umbilical point. All trajectories containing $P_1$ or $P_2$ asymptotically approach $(0, 0, -1)$ as $t \to \pm \infty$. Similarly, all trajectories containing $P_3$ or $P_4$ asymptotically approach $(0, 0, 1)$.
\end{propos}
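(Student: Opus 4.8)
The plan is to read the phase portrait directly off the separated system \eqref{neq32}, which at the level \eqref{neq35} has $f_1=a_2$. First I would classify the four endpoints of the intervals $0\ls\xi\ls m$, $0\ls\eta\ls n$. Via \eqref{neq11} and \eqref{neq14} the geometry is: $\lambda=a_2$ (that is $\xi=0$) and $\mu=a_2$ (that is $\eta=n$) are the two ways in which the section $\alpha_2=0$ is reached, their coincidence $\lambda=\mu=a_2$ being exactly the umbilical locus $P_1,\dots,P_4$; $\lambda=a_1$ ($\xi=m$) is the section $\alpha_1=0$; and $\mu=a_3$ ($\eta=0$) is the section $\alpha_3=0$. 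The essential distinction is the order of vanishing of the right-hand sides: near $\lambda=a_2$, near $\lambda=a_1$ and near $\mu=a_2$ the right-hand side is a simple zero in $\lambda$ (resp.\ $\mu$), which the square-root singularity of $\varphi$ in \eqref{neq14} turns into a double zero in $\xi$ (resp.\ $n-\eta$); near $\mu=a_3$ the right-hand side stays strictly positive. Hence $\xi=0$, $\xi=m$, $\eta=n$ are asymptotically approached equilibria of the one-dimensional motions (infinite $\tau$-time), whereas $\eta=0$ is an ordinary turning point reached in finite $\tau$.

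This classification forces the $\tau$-portrait. The $\xi$-equation has no interior zero and a double zero at each end, so $\xi$ runs monotonically between $0$ and $m$ over all $\tau$; the $\eta$-equation has a double zero only at $\eta=n$ and a hard wall at $\eta=0$, so every non-constant $\eta$-motion leaves $n$, reflects once at $\eta=0$, and returns to $n$, giving $\eta\to n$ as $\tau\to\pm\infty$ with exactly one bounce. Reading both coordinates together, one of the two limits $\tau\to\pm\infty$ gives $(\xi,\eta)\to(0,n)$, an umbilical point, while the other gives $(\xi,\eta)\to(m,n)$, one of the points $(0,0,\pm1)$ identified above. Thus a single elliptic chart already carries the orbit from an umbilical point to an equilibrium.

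The step I expect to be hardest is to show that the umbilical end is attained in finite \emph{real} time while the equilibrium end is not; here the substitution \eqref{neq15}, $dt=\tfrac{\lambda-\mu}{2\lambda\mu}\,d\tau$, is decisive. Near the umbilical end the double zeros force exponential approach, $\xi\to0$ and $n-\eta\to0$ like $e^{-c_1|\tau|}$, $e^{-c_2|\tau|}$ with $c_1,c_2>0$, and the square-root relations of \eqref{neq14} then make $\lambda-a_2$ and $a_2-\mu$ decay like $e^{-2c_1|\tau|}$, $e^{-2c_2|\tau|}$; hence $\lambda-\mu\to0$ exponentially and $\int\tfrac{\lambda-\mu}{2\lambda\mu}\,d\tau$ converges at this end, so the umbilical point is a genuine crossing at a finite instant $t_*$. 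At the equilibrium end, by contrast, $\lambda-\mu\to a_1-a_2>0$, so $dt/d\tau$ is bounded below and the real time is infinite. The delicate part is to make these asymptotics uniform; granting them, continuing the orbit through $t_*$ into a second chart of the same form shows that every orbit not contained in $\alpha_1=0$ or $\alpha_2=0$ meets the umbilical locus only at $t_*$ — the remaining visits to $\alpha_2=0$ occur at the two equilibria, where $\xi=m\neq0$ — and is asymptotic to an equilibrium both as $t\to+\infty$ and as $t\to-\infty$.

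It remains to identify the equilibrium by tracking signs. Since $\alpha_1=0$ is reached only at the asymptotic end $\xi=m$, the sign of $\alpha_1$ is constant along the entire finite-time orbit; since $\alpha_3=0$ occurs at the reflection $\eta=0$, which in the ambient sphere is a transversal crossing of the equator (the $\mu$-fold merely flips the sign of $\alpha_3$), $\alpha_3$ changes sign exactly once on each side of $t_*$. Writing $\epsilon_3$ for the sign of $\alpha_3$ at the umbilical point, the orbit therefore reaches $\alpha_3=-\epsilon_3$ in both time directions, i.e.\ the single equilibrium $(0,0,-\epsilon_3)$. As $P_1,P_2$ lie in $\{\alpha_3>0\}$ and $P_3,P_4$ in $\{\alpha_3<0\}$, this yields $(0,0,-1)$ for $P_1,P_2$ and $(0,0,+1)$ for $P_3,P_4$. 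That the conclusion is the same for every orbit through a given $P_i$, irrespective of its velocity there, is clear because the one-dimensional laws \eqref{neq32} depend only on the level \eqref{neq35}, so the single-bounce structure — and hence the parity of equator crossings — is common to them all.
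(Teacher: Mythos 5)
Your proposal is correct and follows essentially the same route as the paper's proof: both read the phase portrait off the separated equations \eqref{neq32} (monotone $\xi$ between its double-zero ends, a single bounce of $\eta$ at the $\alpha_3=0$ wall, asymptotic approach to the corner and to the umbilical point), then use the singularity of the time change \eqref{neq15} at the umbilical points to conclude the umbilical point is crossed at finite real time, and continue through it into a second chart to get the limit $(0,0,\mp1)$ in both time directions. Your explicit convergence estimate for the real-time integral at the umbilical end and the sign-parity bookkeeping for identifying the limiting equilibrium are simply more detailed versions of steps the paper asserts directly (its parenthetical remark on the singular time change, and its octant-by-octant continuation).
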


\begin{proof} Let $\nu(\tau)=(\xi(\tau), \eta(\tau))$ be a solution of equations \eqref{neq32}. To be definite, suppose that $(\xi(0), \eta(0))$ is in the first octant and ${d\xi}/{d\tau}(0)>0$, ${d\eta}/{d\tau}(0)<0$. From \eqref{neq32} we immediately obtain that $\lim\limits_{\tau\to+\infty}\xi(\tau)=m$. The coordinate $\eta(\tau)$ is decreasing monotonously and reaches the zero value at the time moment
\begin{equation*}
  \tau_0=\int\limits_0^{\eta(0)}\frac{\sqrt{\mu}d\eta}{\sqrt{(a_1-\mu)(a_2-\mu)}}=\int\limits_{a_3}^{\mu(0)}\frac{\mu d\mu}{(a_1-\mu)(a_2-\mu)\sqrt{\mu-a_3}}>0;
\end{equation*}
after that the trajectory enters the octant $\{\alpha_1>0$, $\alpha_2>0$, $\alpha_3<0\}$. In new coordinates, ${d\xi}/{d\tau}|_{\tau=\tau_0}>0$ and it follows from \eqref{neq32} that $\lim\limits_{\tau \to +\infty}\eta(\tau)=n$. Thus, $\lim\limits_{t \to +\infty}\nu(\tau(t)) = \lim\limits_{\tau \to +\infty} \nu(\tau) = (0,0,-1)$.

It is easily seen that $\lim\limits_{\tau \to -\infty}\xi(\tau)=0$ and $\lim\limits_{\tau \to -\infty}\eta(\tau)=n$. Consequently, $\lim\limits_{\tau \to -\infty}\nu(\tau)=P_1$. Nevertheless in the real time $t$ the point $P_1$ is not an equilibrium (the change \eqref{neq15} has singularities at the umbilical points). Therefore the trajectory will cross $P_1$ as $t$ decreases, and similar to the above case we get $\lim\limits_{t \to -\infty} \nu(t) = (0,0,-1)$ (Fig.~\ref{fig_05}). This proves the statement. \end{proof}

\begin{figure}[!ht]
\centering
\includegraphics[width=0.3\linewidth,keepaspectratio]{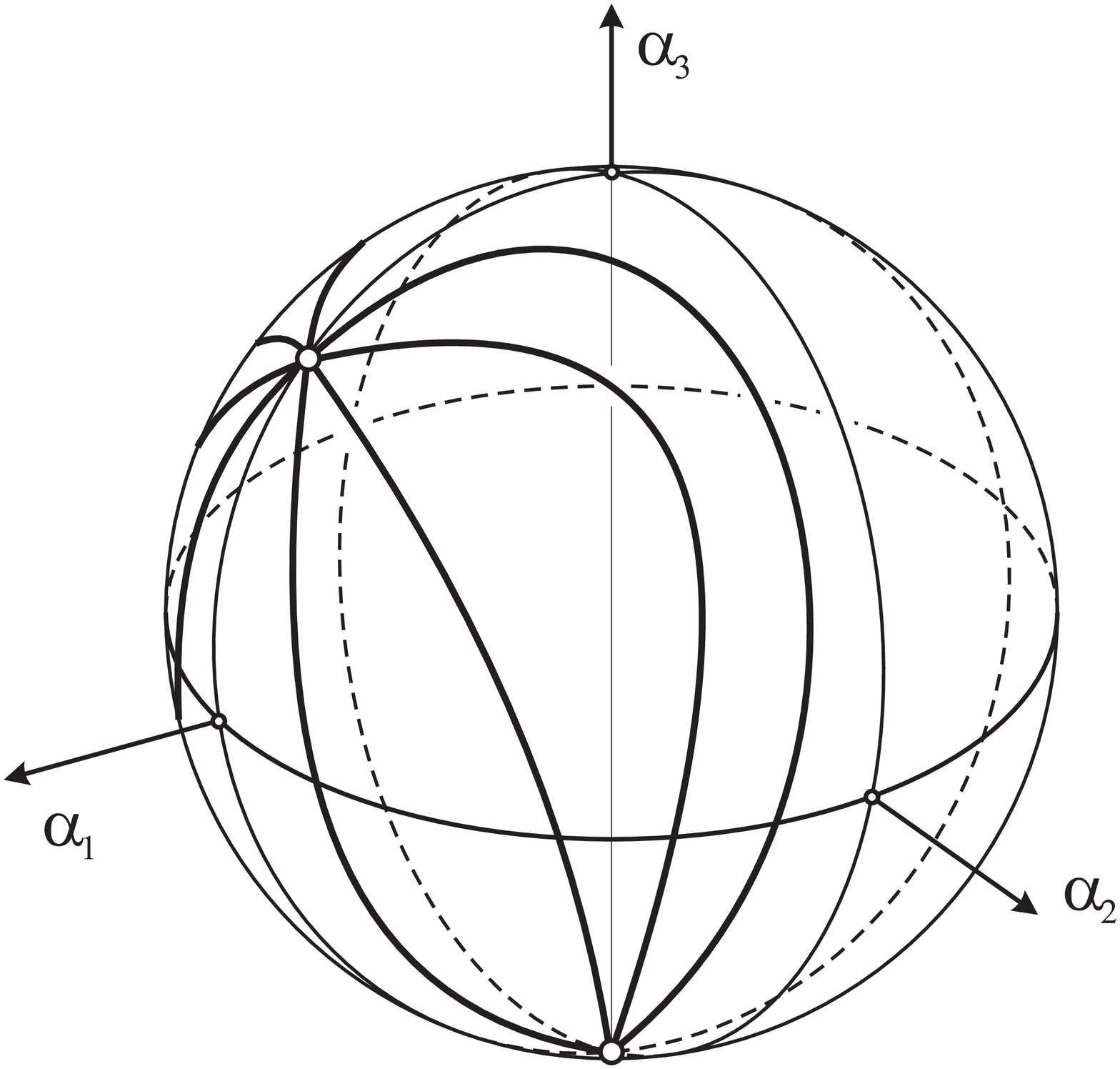}
\caption{}\label{fig_05}
\end{figure}

To construct the integral set $\mfT_{a_1a_2,-a_3}$ we consider first the trajectories in the hemisphere $\alpha_1 \gs 0$. The scheme of construction is shown in Fig.~\ref{fig_06}~--~\ref{fig_09}. The segments {\it  1}, {\it  2} and {\it  3}, {\it  4} in the phase space give two circles composed of the admissible velocities of motions from the points $P_1$ and $P_4$. Pairwise unions of the segments {\it  11}, {\it  10} and {\it  12}, {\it  9} give the same trajectory coded by {\it  21}; it is the asymptotical motion along the cross section $\alpha_2=0$ from the equilibrium point $I$ with coordinates $(0,0,1)$ to the equilibrium point $II$ with coordinates $(0,0,-1)$. The unions of {\it  16}, {\it  13} and {\it  15}, {\it  14} give a similar motion from $II$ to $I$ coded as {\it  29}. Now we glue the regions in Fig.~\ref{fig_06}{\it a} and Fig.~\ref{fig_06}{\it b} along the curves {\it  1} and {\it  2}, and after that we cut the result along {\it  10}, {\it  11}, {\it  13} and {\it  16}. We obtain two squares in Fig.~\ref{fig_07}{\it a} and Fig.~\ref{fig_07}{\it b}. Analogously, from the regions in Fig.~\ref{fig_06}{\it c} and Fig.~\ref{fig_06}{\it d} we get the squares in Fig.~\ref{fig_07}{\it c} and Fig.~\ref{fig_07}{\it d}. Identifying the corresponding sides of the squares in Fig.~\ref{fig_07}{\it b} and  Fig.~\ref{fig_07}{\it c}, we get the torus shown in Fig.~\ref{fig_08}{\it a}. The equilibrium points $I$ and $II$ lie in its directrix consisting of the trajectories {\it  21} and {\it  29}. To see the character of all other trajectories is an easy task. In the same way, the squares in Fig.~\ref{fig_07}{\it a} and Fig.~\ref{fig_07}{\it   d} generate the torus in Fig.~\ref{fig_08}{\it b}. The tori of Fig.~\ref{fig_08}{\it a} and Fig.~\ref{fig_08}{\it  b} intersect along the common directrix and form a surface shown in Fig.~\ref{fig_09}.
Another such surface is formed in the phase space by the trajectories lying in the hemisphere ${\alpha_1\ls 0}$. These two surfaces intersecting at the trajectories {\it 5 -- 8} (the motions along the cross section $\alpha_1 = 0$) give the integral set $\mfT_{a_1a_2,-a_3}$.

\begin{figure}[ht]
\begin{center}
\begin{minipage}[h]{0.4\linewidth}
\centering
\includegraphics[width=1\linewidth,keepaspectratio]{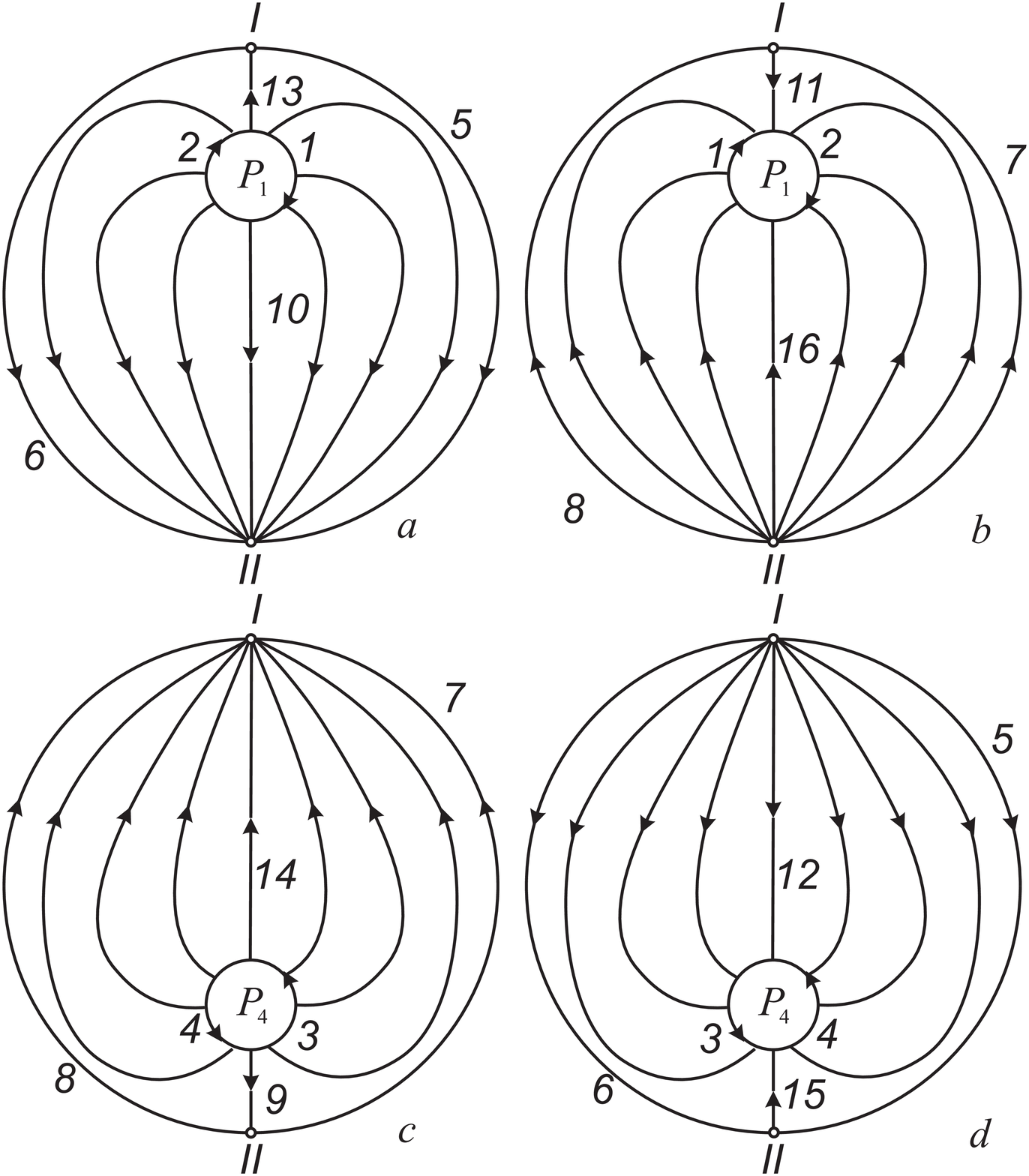}
\caption{}\label{fig_06}
\end{minipage}
\hspace{5mm}
\begin{minipage}[h]{0.4\linewidth}
\centering
\includegraphics[width=1\linewidth,keepaspectratio]{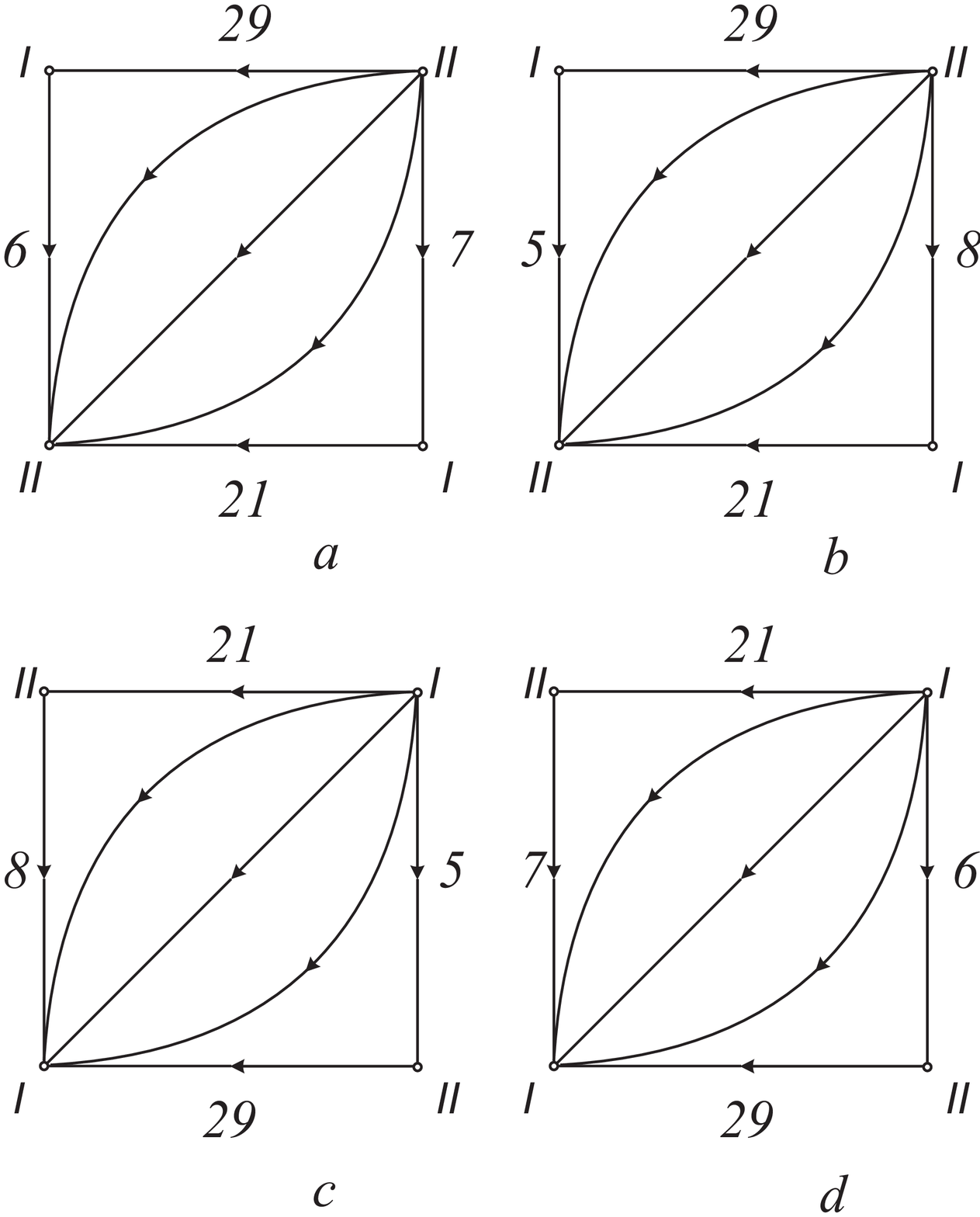}
\caption{}\label{fig_07}
\end{minipage}
\end{center}
\end{figure}

\begin{figure}[!ht]
\centering
\includegraphics[width=0.7\paperwidth,keepaspectratio]{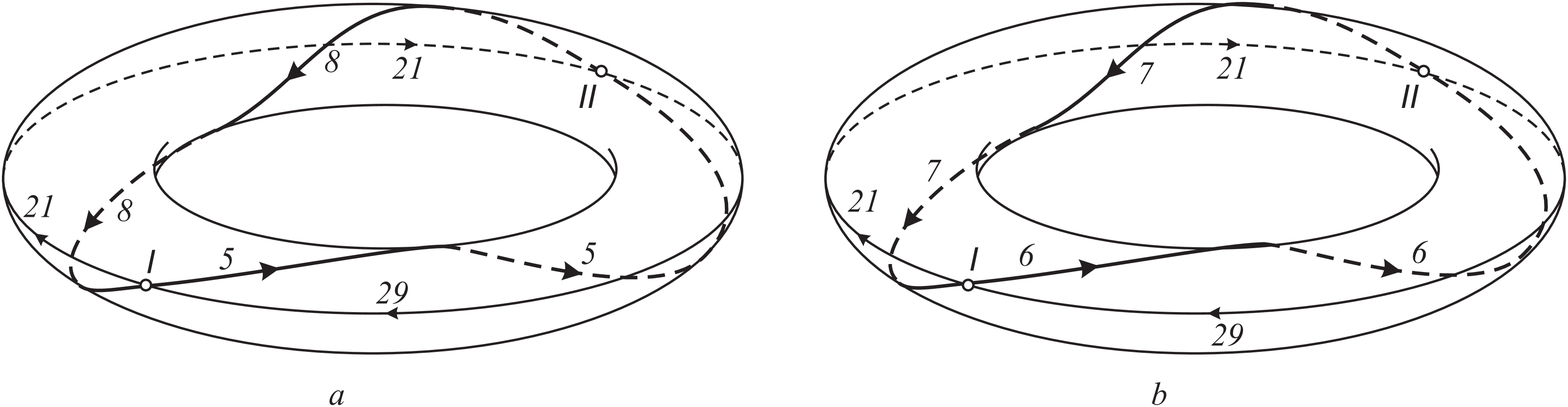}
\caption{}\label{fig_08}
\end{figure}

\begin{figure}[!ht]
\centering
\includegraphics[width=0.35\paperwidth,keepaspectratio]{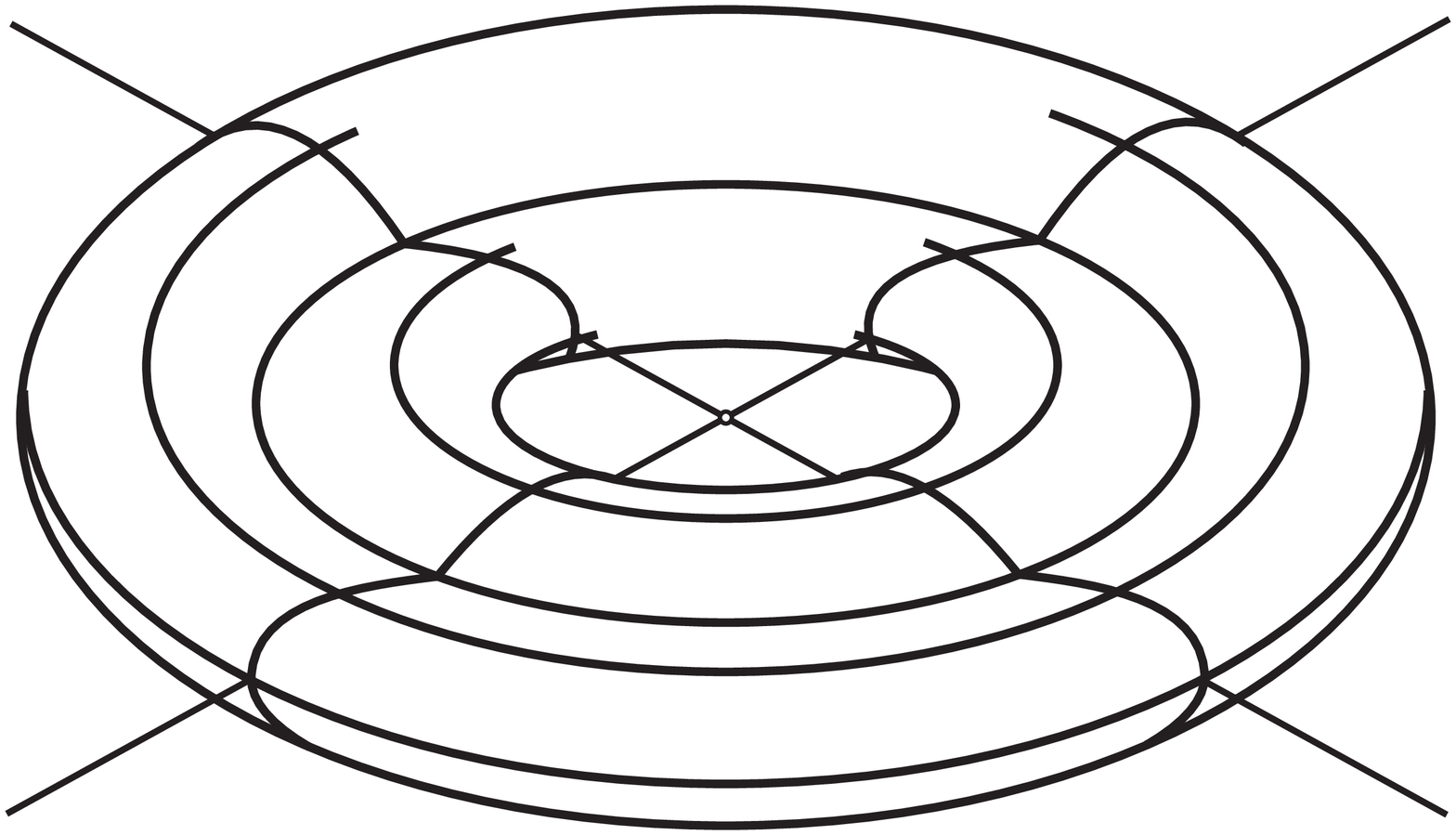}
\caption{}\label{fig_09}
\end{figure}

Let again \eqref{neq22} hold and $-a_3<f<a_1-(a_2+a_3)$. So $a_2<f_1<a_3$. The region of possible motions $\tilde{M}_{h_1,f}$ is a ring
\begin{equation}\label{neq36}
  f_1 \ls \lambda \ls a_1.
\end{equation}

Equations \eqref{neq32} admit two periodic motions which are rotations along the cross section $\alpha_1 = 0$ in two directions. The cross section $\alpha_1 = 0$ splits \eqref{neq36} into two strips and any trajectory starting in one of the strips remains it it forever. The elementary analysis of \eqref{neq32} shows that any such trajectory touches the outer boundary of the corresponding strip exactly once and, as $t \to \pm \infty$, asymptotically approaches the cross section $\alpha_1 = 0$ (see Fig.~\ref{fig_10}). Each strip with a fixed motion direction on it gives a torus in the phase space. One of its directrix is the corresponding periodic motion. Along this directrix the torus intersects with another torus given by another strip with the same motion direction. Therefore $\mfT_{h_1,f}$ consists of two connected components each of which is homeomorphic to the surface shown in Fig.~\ref{fig_09}.

\begin{figure}[!ht]
\centering
\includegraphics[width=0.3\linewidth,keepaspectratio]{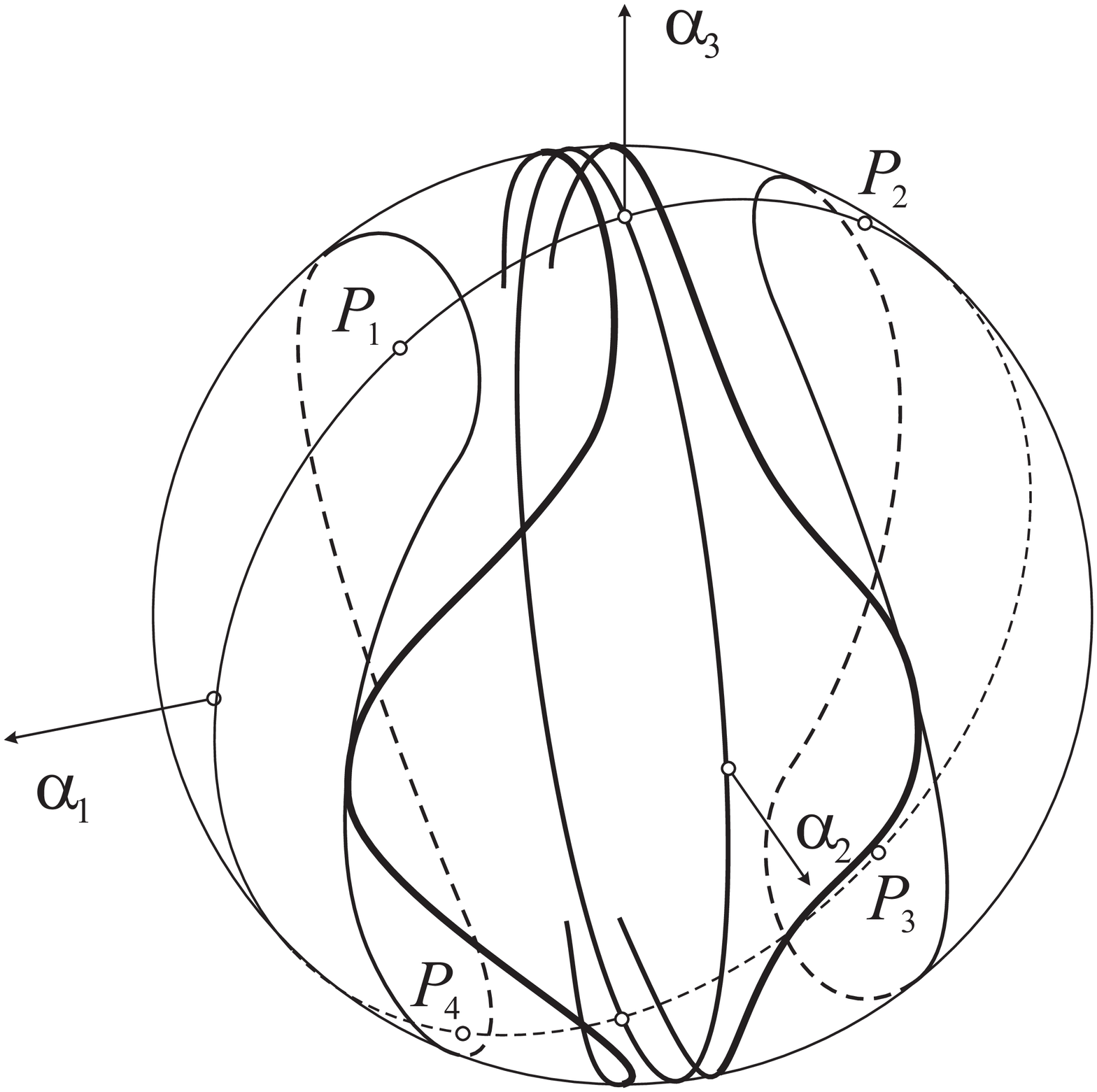}
\caption{}\label{fig_10}
\end{figure}

In the case
\begin{equation}\label{neq37}
  f \gs a_1-(a_2+a_3)
\end{equation}
the ring \eqref{neq36} degenerates into the cross section $\alpha_1 = 0$. The manifold $\mfT_{h_1,f}$ consists of two non-intersecting circles corresponding to the motions along this cross section in two directions.

Now we consider the points of the half-line \eqref{neq23}. Substituting the values from \eqref{neq23} into \eqref{neq16}, we get
\begin{equation}\label{neq38}
  \left(\frac{d\xi}{d\tau}\right)^2=\frac{(\lambda-a_2)(f_2-\lambda)}{\lambda}, \qquad \left(\frac{d\eta}{d\tau}\right)^2=\frac{(a_2-\mu)(f_2-\mu)}{\mu}.
\end{equation}

Let $-a_1 < f \ls a_2-(a_3+a_1)$, then according to notation \eqref{neq27}, $a_3 < f_2 \ls a_2$, so it follows from the first equation \eqref{neq38} that $\xi \equiv 0$. The solutions in this case are oscillations along the cross section $\alpha_2 = 0$ about the points $(\pm 1,0,0)$ in the limits $\mu \ls f_2$. If
\begin{equation}\label{neq39}
  f_2=a_2,
\end{equation}
then one of the oscillations reaches the umbilical points $P_1$ and $P_4$, while the other one reaches $P_2$ and $P_3$. The manifold $\mfT_{h_2,f}$ consists of two non-intersecting circles.

In the case when
\begin{equation}\label{neq40}
  a_2-(a_3+a_1)<f<-a_3
\end{equation}
we get $a_2<f_2<a_1$, so the region of possible motions is the union of two disks
\begin{equation}\label{neq41}
  \lambda \ls f_2.
\end{equation}
Let us investigate the motions in the domain $\{\alpha_1>0\}$.

\begin{propos}\label{prop3} Along the segment of the cross section $\alpha_2=0$ we have a periodic oscillation about the point $(1,0,0)$ in the limits \eqref{neq41}. Any other solution cross, by turns, the umbilical points $P_1$ and $P_4$. While moving from $P_4$ to $P_1$, the same as from $P_1$ to $P_4$, the trajectory exactly once touches the boundary of the disk \eqref{neq41}.
\end{propos}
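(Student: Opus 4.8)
The plan is to reduce everything to the qualitative study of the two separated one–degree–of–freedom equations \eqref{neq38} on the region of possible motion in the half $\{\alpha_1>0\}$, which under \eqref{neq40} and \eqref{neq41} is the product of segments $\lambda\in[a_2,f_2]$, $\mu\in[a_3,a_2]$. First I would record the signs. Writing $\Phi(\lambda)=(\lambda-a_2)(f_2-\lambda)/\lambda$ and $\Psi(\mu)=(a_2-\mu)(f_2-\mu)/\mu$, the inequality $a_2<f_2<a_1$ makes $\Phi\gs0$ on $[a_2,f_2]$ with simple zeros at both endpoints, while $\Psi\gs0$ on $[a_3,a_2]$ vanishing only at $\mu=a_2$ and with $\Psi(a_3)>0$. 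The whole argument then rests on translating these zeros into the behaviour of the genuine coordinates $\xi,\eta$ of \eqref{neq14} and of the real time \eqref{neq15}.

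The decisive step is the classification of the four endpoints. At $\lambda=f_2$ the cubic $\varphi$ does not vanish, so near this value $\xi$ depends smoothly and monotonically on $\lambda$; hence $\xi=\xi_*:=\int_{a_2}^{f_2}\sqrt{\lambda/\varphi(\lambda)}\,d\lambda$ is an ordinary turning point at which $\lambda$ is reflected, and this will be the unique contact with the boundary \eqref{neq41}. By contrast, at $\lambda=a_2$ and at $\mu=a_2$ the value $a_2$ is a root of $\varphi$, so \eqref{neq14} forces $\lambda-a_2\sim\xi^2$ and $a_2-\mu\sim(n-\eta)^2$; consequently $(d\xi/d\tau)^2\sim\mathrm{const}\cdot\xi^2$ near $\xi=0$ and $(d\eta/d\tau)^2\sim\mathrm{const}\cdot(n-\eta)^2$ near $\eta=n$, so both $\xi\to0$ and $\eta\to\pm n$ are attained only asymptotically, as $\tau\to\pm\infty$. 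Since at such an instant $\lambda\to a_2$ and $\mu\to a_2$ simultaneously, the trajectory tends, in the amended time, to an umbilical point. Finally at $\mu=a_3$ one has $\Psi(a_3)>0$, hence $d\eta/d\tau\ne0$ there; exactly as in the argument preceding \eqref{neq36}, this is not a turning point but a transversal crossing of the section $\alpha_3=0$ (extending $\eta$ to negative values on the sheet $\alpha_3<0$).

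With this dictionary the three assertions follow almost formally. On the invariant circle $\alpha_2=0$ the motion is one–dimensional, governed by $\Psi$ on the arc $\lambda=a_2$ (where $\xi\equiv0$) and by $\Phi$ on the two arcs $\mu=a_2$ (where $\eta\equiv n$); its only reflection points are the boundary points $\lambda=f_2$, whereas the umbilical points $P_1,P_4$ and the centre $(1,0,0)$ lie in its interior, so we obtain a single oscillation swinging symmetrically about $(1,0,0)$ within the limits \eqref{neq41}. For a trajectory with $\alpha_2\not\equiv0$ the two equations are independent: $\xi$ runs $0\to\xi_*\to0$, touching the boundary exactly once, while $\Psi>0$ on the whole open interval forces $\eta$ to be strictly monotone, running from $-n$ to $+n$. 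Hence as $\tau\to-\infty$ the point tends to the umbilical point with $\alpha_3<0$, namely $P_4$, and as $\tau\to+\infty$ to the one with $\alpha_3>0$, namely $P_1$; along the way $\alpha_3=0$ is met once and the boundary \eqref{neq41} is touched once. This is precisely the passage ``from $P_4$ to $P_1$''.

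The main obstacle is the passage from the amended time $\tau$ to the real time $t$ at the umbilical points: there the trajectory spends infinite $\tau$–time yet must be shown to cross in finite $t$. I would control this exactly as in the proof of Proposition~\ref{prop2}: near an umbilical point $\lambda-\mu=(\lambda-a_2)+(a_2-\mu)$ decays exponentially in $\tau$ by the estimates above, so by \eqref{neq15} the elapsed real time $\int(\lambda-\mu)/(2\lambda\mu)\,d\tau$ converges, and the umbilical point is reached and crossed in finite $t$. After the crossing $\eta$, having attained its maximal value $n$, can only decrease, so the next $\tau$–arc runs from $P_1$ back to $P_4$, again with a single boundary contact; iterating, the trajectory meets $P_1$ and $P_4$ by turns. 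The same finiteness shows that the oscillation along $\alpha_2=0$ crosses $P_1,P_4$ in finite real time and is therefore genuinely periodic rather than homoclinic, which completes the first assertion.
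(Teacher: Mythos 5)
Your proof is correct and follows the same overall strategy as the paper's: analyze the separated equations \eqref{neq38} in the amended time $\tau$, show that each pseudo-trajectory not lying on the section $\alpha_2=0$ runs from one umbilical point to the other with a single tangency of the curve $\lambda=f_2$, and then glue the arcs in real time at $P_1$, $P_4$. The technical vehicle differs, though. The paper encodes all time estimates in the explicit quadrature $C(x,y)$ of \eqref{neq43} and the implicit solution formulas \eqref{neq45}--\eqref{neq48}: integrability of the integrand at $z=f_2$ and $z=a_3$ gives finite-time reflection at the boundary and transversal crossing of $\alpha_3=0$, monotonicity of the implicit solutions gives uniqueness of the tangency, and the non-integrable factor $1/(z-a_2)$ gives the limits \eqref{neq49}. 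You instead classify the zeros of the right-hand sides by their order --- simple at $\lambda=f_2$ (reflection in finite $\tau$), double at $\xi=0$ and $\eta=\pm n$ (exponential asymptotics, infinite $\tau$), no zero at $\mu=a_3$ (transversality) --- which is the same computation in differential rather than integral form: the logarithmic divergence of $C$ at $z=a_2$ is exactly the integral expression of your exponential estimate. Your route is slightly more elementary and buys two details the paper leaves terse: a quantitative proof that the umbilical points are passed in finite real time (exponential decay of $\lambda-\mu$ makes $\int(\lambda-\mu)/(2\lambda\mu)\,d\tau$ converge, where the paper only remarks that $P_1$, $P_4$ are not equilibria in real time), and an explicit alternation argument ($\eta$ is monotone on each arc, so consecutive arcs join $P_4$ to $P_1$ and back). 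What it does not buy is the explicit function $C(x,y)$ itself, which the paper reuses verbatim in the proof of Proposition~\ref{prop4}, via \eqref{neq51}, to compute the invariant $\theta$ of \eqref{neq53}; so in the economy of the whole paper the quadrature is not wasted effort. One cosmetic point: your $\Psi$ collides with the paper's $\Psi(\zeta)$ defined in \eqref{neq52}, so it should be renamed if this argument is to sit alongside the original text.
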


\begin{proof} The first statement is obvious. Consider any other solution
\begin{equation}\label{neq42}
  \nu(\tau)=(\lambda(\tau),\mu(\tau)), \quad  -\infty < \tau <+\infty,
\end{equation}
of equations \eqref{neq38}. Any segment \eqref{neq42} of a real solution parameterized with the ``amended time'' $\tau$ will be called a pseudo-trajectory.

To be definite suppose that $\nu(0)=(\lambda_0, \mu_0)$ lies in the first octant and ${d\lambda}/{d\tau}(0)>0$, ${d\mu}/{d\tau}(0)>0$.

On the set $[a_3,a_2)\times[a_3,a_2)\cup(a_2,f_2]\times(a_2,f_2]$, let us define the function of two variables
\begin{equation}\label{neq43}
  C(x,y)=\int\limits_x^y\frac{zdz}{(z-a_2)\sqrt{(a_1-z)(f_2-z)(z-a_3)}}
\end{equation}
and put
\begin{equation}\label{neq44}
  \tau_0=C(\lambda_0,f_2)>0, \qquad \tau_1=C(a_3, \mu_0)<0.
\end{equation}
It follows from \eqref{neq14}, \eqref{neq38} that the point $\nu(\tau_1)=(\lambda(\tau_1), a_3)$ belongs to the curve $\alpha_3=0$, and the point $\nu(\tau_0)=(f_2, \mu(\tau_0))$ lies on the boundary \eqref{neq41}. Moreover for $\tau \ls \tau_0$ the dependency $\lambda(\tau)$ is defined from the equation
\begin{equation}\label{neq45}
  \tau_0-\tau=C(\lambda(\tau), f_2),
\end{equation}
and for $\tau \gs \tau_0$,
\begin{equation}\label{neq46}
  \tau-\tau_0=C(\lambda(\tau), f_2).
\end{equation}
Analogously, for $\tau \ls \tau_1$ we have
\begin{equation}\label{neq47}
  \tau_1-\tau=C(\mu(\tau), a_3),
\end{equation}
and for $\tau \gs \tau_1$,
\begin{equation}\label{neq48}
  \tau-\tau_1=C(\mu(\tau), a_3).
\end{equation}

On the mentioned segments of $\tau$ the values $\lambda$ and $\mu$ change monotonously, therefore the tangency point of a pseudo-trajectory with the parametric curve $\lambda=f_2$ is unique, namely, $\tau = \tau_0$. Passing to the limits as $\tau \to \pm \infty$ in \eqref{neq45}~--~\eqref{neq48} and taking into account the structure of the integral in \eqref{neq43} we get
\begin{equation}\label{neq49}
  \lim\limits_{\tau\to\pm\infty}\lambda(\tau)=\lim\limits_{\tau\to\pm\infty}\mu(\tau)=a_2.
\end{equation}
Thus $\lim\limits_{\tau\to -\infty}\nu(\tau)=P_4$ and $\lim\limits_{\tau\to +\infty}\nu(\tau)=P_1$. All other cases of the position of $\nu(0)$ and the initial velocity directions are considered similarly. Since $P_1$ and $P_4$ are not equilibrium points in the real time $t$, the real trajectory crosses these points by turn. The proposition is proved.\end{proof}

Let us find out the global character of the trajectories corresponding to the values \eqref{neq23} in the interval \eqref{neq40}. To each pseudo-trajectory \eqref{neq42} we assign two numbers
\begin{equation}\label{neq50}
  \varphi^+(\nu)=\lim\limits_{\tau\to +\infty}\frac{\lambda(\tau)-a_2}{a_2-\mu(\tau)}, \qquad \varphi^-(\nu)=\lim\limits_{\tau\to -\infty}\frac{\lambda(\tau)-a_2}{a_2-\mu(\tau)}.
\end{equation}

\begin{propos}\label{prop4} The limits \eqref{neq50} exist. The product $\theta=\varphi^+(\nu)\varphi^-(\nu)$ depends only on the constant $f$. If $\theta=1$, then all trajectories are closed. If $\theta \neq 1$, then we have a unique periodic solution, namely, the motion along the segment of the cross section $\alpha_2=0$; all other trajectories asymptotically approach it as $t\to\pm\infty$ $($see Fig.~$\ref{fig_11})$.
\end{propos}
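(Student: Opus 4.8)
The plan is to reduce everything to the two decoupled one–dimensional motions hidden in \eqref{neq38} and to the way a real orbit crosses the umbilical points $P_1,P_4$. Writing $u=\lambda-a_2\gs 0$ and $v=a_2-\mu\gs 0$, I would first pass from \eqref{neq38} and \eqref{neq14} to the autonomous first–order equations
\[
\frac{d\lambda}{d\tau}=\pm\frac{\lambda-a_2}{\lambda}\sqrt{(a_1-\lambda)(\lambda-a_3)(f_2-\lambda)},\qquad
\frac{d\mu}{d\tau}=\pm\frac{a_2-\mu}{\mu}\sqrt{(a_1-\mu)(\mu-a_3)(f_2-\mu)} ,
\]
obtained by substituting $d\xi=\sqrt{\lambda/\varphi(\lambda)}\,d\lambda$, $d\eta=\sqrt{-\mu/\varphi(\mu)}\,d\mu$. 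The crucial observation is that $\lambda=a_2$ and $\mu=a_2$ are equilibria of these equations with one and the \emph{same} exponential rate
\[
K=\frac{1}{a_2}\sqrt{(a_1-a_2)(a_2-a_3)(f_2-a_2)}>0 ,
\]
so that along any \pt\ of Proposition~\ref{prop3} both $u$ and $v$ tend to $0$ like $e^{-K|\tau|}$.

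Existence of the limits \eqref{neq50} then follows at once: near $a_2$ one has $d\ln u/d\tau\to -K$ and $d\ln v/d\tau\to -K$ as $\tau\to+\infty$ (and $\to +K$ as $\tau\to-\infty$), hence $u\sim A^\pm_\lambda e^{\mp K\tau}$, $v\sim A^\pm_\mu e^{\mp K\tau}$ and $\varphi^\pm=A^\pm_\lambda/A^\pm_\mu$. For the second assertion I would note that each of $u(\tau)$, $v(\tau)$ is a homoclinic ``bump'' of the corresponding one–dimensional equation, symmetric about its turning instant ($\lambda=f_2$, resp.\ $\mu=a_3$); any \pt\ differs from another only by an independent time shift of the $\lambda$– and of the $\mu$–motion. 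Such a shift multiplies $A^+_\lambda$ by $e^{Kc}$ and $A^-_\lambda$ by $e^{-Kc}$, so the products $A^+_\lambda A^-_\lambda$ and $A^+_\mu A^-_\mu$ are shift–invariant and depend on $f$ alone; since $\theta=\varphi^+\varphi^-=(A^+_\lambda A^-_\lambda)/(A^+_\mu A^-_\mu)$, this proves that $\theta$ is a function of $f$ only. (Explicitly $\theta=\big((f_2-a_2)/(a_2-a_3)\big)^2 e^{2K(R_\lambda-R_\mu)}$, where $R_\lambda,R_\mu$ are the regular parts of the integral \eqref{neq43} as $\lambda\to a_2$, $\mu\to a_2$, but the qualitative statements below do not need this value.)

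The heart of the argument is the behaviour at an umbilical point, and this is the step I expect to be the main obstacle. Near $P_1$ I would take local coordinates $x=\alpha_2$, $y=\alpha_3-\alpha_3^{*}$ with $\alpha_3^{*}=\sqrt{(a_2-a_3)/(a_1-a_3)}$, and read off from \eqref{neq11} the leading relations $uv=(a_1-a_2)(a_2-a_3)\,x^2$ and $u-v=2\alpha_3^{*}(a_1-a_3)\,y$. Since the umbilical points are \emph{not} equilibria in the real time $t$ (the change \eqref{neq15} is singular there), a generic orbit crosses $P_1$ smoothly and transversally to $\alpha_2=0$, i.e.\ with $\dot x\neq 0$ at the crossing instant $t_*$. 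Putting $x\approx\dot x\,(t-t_*)$, $y\approx\dot y\,(t-t_*)$ and solving the two leading relations for $u,v$ gives, as $t\to t_*^{\mp}$,
\[
\frac{u}{v}\longrightarrow\frac{\sqrt{\Delta}\mp c_1\dot y}{\sqrt{\Delta}\pm c_1\dot y},\qquad
\Delta=c_1^{2}\dot y^{2}+4c_2\dot x^{2}>0,
\]
where $c_1=2\alpha_3^{*}(a_1-a_3)$, $c_2=(a_1-a_2)(a_2-a_3)$. The two limiting ratios are reciprocal, so the slope $(\lambda-a_2)/(a_2-\mu)$ with which one \pt\ arrives at an umbilical point and the slope with which the next \pt\ leaves it multiply to $1$. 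The same computation (now with $\alpha_3^{*}<0$) works at $P_4$.

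Finally I would run the monodromy. Number the successive \pt-segments of a real orbit by $n$; by Proposition~\ref{prop3} they join $P_4$ and $P_1$ alternately. Let $a_n$ be the departure slope $\varphi^-$ of the $n$-th segment. On each segment $\varphi^+\varphi^-=\theta$, while the crossing rule just proved gives (arrival slope of segment $n$)$\times$(departure slope of segment $n+1$)$=1$; together these yield the recursion $a_{n+1}=a_n/\theta$, i.e.\ $a_n=a_0\,\theta^{-n}$. If $\theta=1$ the slope is the same on every segment, there are only finitely many segment types, and the deterministic, invertible gluing at the umbilical points (an injective self–map of a finite set is a bijection) forces the orbit to be periodic — all trajectories are closed. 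If $\theta\neq 1$ then $a_n\to 0$ or $\infty$ as $n\to\pm\infty$; geometrically the segments crowd onto the arc $\alpha_2=0$, where $u\equiv 0$ or $v\equiv 0$ (the degenerate slopes $0,\infty$), that is, onto the periodic oscillation of Proposition~\ref{prop3}. Since each segment takes a finite amount of real time which, as the segments crowd onto the oscillation, stays bounded below by a positive constant, $n\to\pm\infty$ corresponds to $t\to\pm\infty$; hence every non-oscillatory orbit approaches the oscillation asymptotically as $t\to\pm\infty$, and this oscillation is the unique periodic solution, as claimed (Fig.~\ref{fig_11}).
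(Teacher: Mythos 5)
Your proposal is correct, and its skeleton is the same as the paper's (decomposition of a real orbit into pseudo-trajectories, the slope invariant $\theta$, a gluing rule at the umbilical points, the recursion $a_{n+1}=a_n/\theta$), but both key lemmas are proved by genuinely different means. For the independence of $\theta$ on the pseudo-trajectory, the paper computes $\varphi^{\pm}$ explicitly by splitting the integral \eqref{neq43} into a logarithmic plus a regular part (formulas \eqref{neq51}--\eqref{neq53}); your argument — the $\lambda$- and $\mu$-motions are homoclinic bumps of autonomous one-dimensional equations with one and the same exponential rate $K=1/s$, and independent time shifts multiply $A^{+}_{\lambda}$ and $A^{-}_{\lambda}$ by reciprocal factors, leaving $A^{+}_{\lambda}A^{-}_{\lambda}$ and $A^{+}_{\mu}A^{-}_{\mu}$ invariant — reaches the same conclusion more conceptually. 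What the paper's computation buys is the closed form \eqref{neq53} for $\theta$, which it reuses verbatim after Proposition~\ref{prop5} for the case \eqref{neq58}. For the gluing rule, the paper derives the circle of admissible velocities at an umbilical point, introduces the real-time limits $\psi^{\pm}$ and the relations \eqref{neq54}, and obtains \eqref{neq55}; you instead expand $u=\lambda-a_2$, $v=a_2-\mu$ near $P_1$ as roots of a quadratic ($uv=c_2x^2$, $u-v\approx c_1y$) along a real-time-smooth orbit and read off that the incoming and outgoing ratios $u/v$ are reciprocal — the same relation \eqref{neq55}, obtained without \eqref{neq54}. Both derivations rest on the identical underlying fact (the umbilical points are regular points of the real-time flow; the singularity of \eqref{neq15} is a coordinate artifact). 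Finally, in the case $\theta\neq 1$ the paper concludes via $|\psi^{+}(\nu_n)|\to a_2\sqrt{f_2-a_2}$, i.e.\ convergence of the crossing velocities to that of the oscillation, while you argue that the segments crowd onto $\{\alpha_2=0\}$ and add the bookkeeping remark that segment durations are bounded below, so $n\to\pm\infty$ really corresponds to $t\to\pm\infty$ — a point the paper passes over silently, and where your version is actually the more careful one.

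Two small caveats. First, the existence of the limits \eqref{neq50} does not ``follow at once'' from $d\ln u/d\tau\to -K$ (that alone permits, e.g., $u=e^{-K\tau+\sqrt{\tau}}$); you need the standard refinement that $d\ln(ue^{K\tau})/d\tau=O(u)$ is integrable along the orbit — which is precisely what the continuity of $\Psi$ in \eqref{neq52} encodes in the paper's version. Second, your ``injective self-map of a finite set'' detour for $\theta=1$ is unnecessary: since a pseudo-trajectory is determined, as a curve, by its departure slope at a given umbilical point (it is fixed up to an overall $\tau$-shift by the relative shift of its two bumps), equal slopes force $\nu_{n+2}=\nu_n$, so every orbit closes after two segments — exactly the paper's remark that any two successive pseudo-trajectories already form a closed solution. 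Likewise the crowding claim deserves its one-line justification: small slope means the $\lambda$- and $\mu$-peaks are far apart in $\tau$, so at every instant at least one of $u$, $v$ is exponentially small, i.e.\ the whole segment lies near the great circle $\alpha_2=0$. These are gaps of rigor comparable to what the paper itself allows, not gaps of ideas.
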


\begin{figure}[!ht]
\centering
\includegraphics[width=0.6\paperwidth,keepaspectratio]{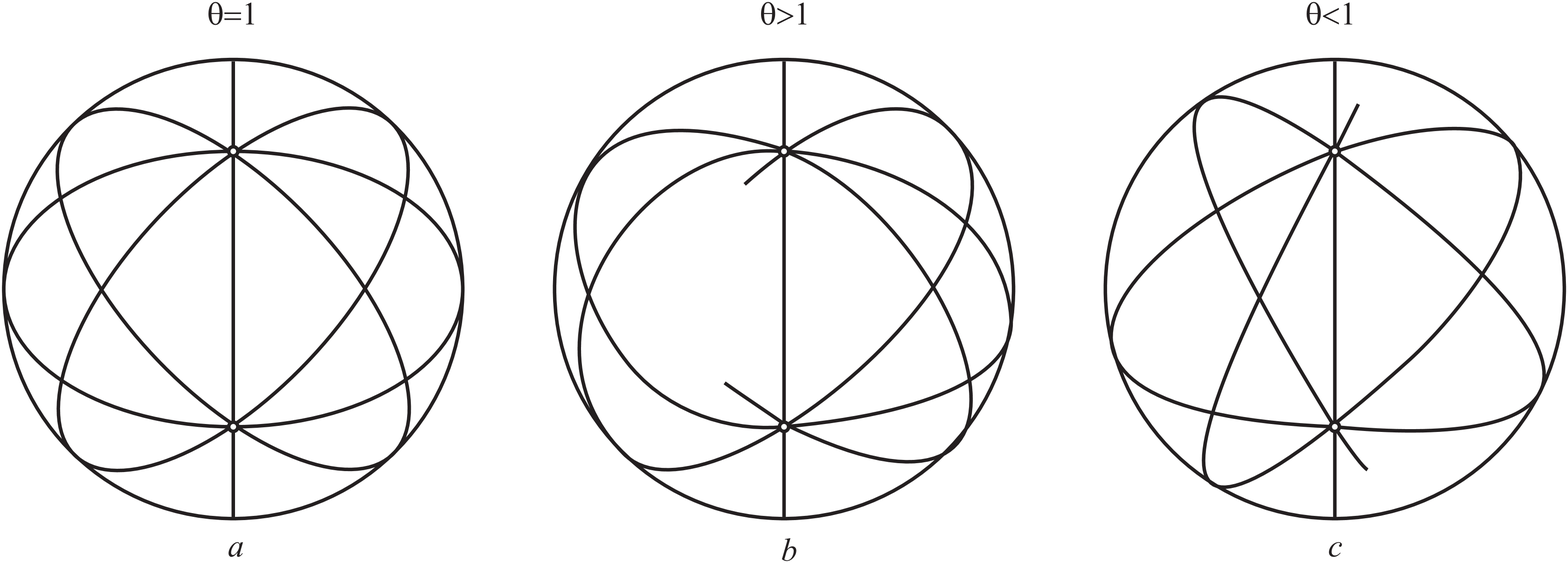}
\caption{}\label{fig_11}
\end{figure}

\begin{proof} Let us present the calculations for a pseudo-trajectory \eqref{neq42} such that
$$
\nu(0)=(f_2,\mu_0), \qquad \ds{\frac{d\mu}{d\tau}(0)>0}.
$$
Equations \eqref{neq45} -- \eqref{neq48} hold in view of the fact that in notation \eqref{neq44} we have $\tau_0=0$.

We rewrite the integral \eqref{neq43} in the form
\begin{equation}\label{neq51}
  C(x,y)=s\ln\frac{y-a_2}{x-a_2}+\int\limits_{x-a_2}^{y-a_2}\Psi(\zeta)d\zeta,
\end{equation}
where $\ds{s=\frac{a_2}{\sqrt{a_1^*f_2^*a_3^*}}}$, $a_1^*=a_1-a_2$, $f_2^*=f_2-a_2$, $a_3^*=a_2-a_3$, and
\begin{equation}\label{neq52}
  \Psi(\zeta)=\frac{\zeta+a_2}{\zeta\sqrt{(a_1^*-\zeta)(f_2^*-\zeta)(\zeta+a_3^*)}}-\frac{s}{\zeta}
\end{equation}
is continuous in the neighborhood of the value $\zeta=0$.

Let $\tau>0$. Equalities \eqref{neq46}, \eqref{neq48} together with \eqref{neq44}, \eqref{neq51} give
\begin{equation*}
\begin{array}{c}
  \ds{\tau=s\ln\frac{f_2-a_2}{\lambda(\tau)-a_2}+\int\limits^{f_2-a_2}_{\lambda(\tau)-a_2}\Psi(\zeta)d\zeta,}\quad \ds{\tau=s\ln\frac{a_2-\mu_0}{a_2-\mu(\tau)}+\int\limits^{\mu_0-a_2}_{\mu(\tau)-a_2}\Psi(\zeta)d\zeta.}
\end{array}
\end{equation*}
Subtract the first value from the second one:
\begin{equation*}
  s\ln\left(\frac{a_2-\mu_0}{f_2-a_2}\frac{\lambda(\tau)-a_2}{a_2-\mu(\tau)}\right) = \int\limits^{f_2-a_2}_{\lambda(\tau)-a_2}\Psi(\zeta)d\zeta+\int\limits_{\mu_0-a_2}^{\mu(\tau)-a_2}\Psi(\zeta)d\zeta.
\end{equation*}
In view of \eqref{neq49} let $\tau$ tend to $+\infty$. Then we get
\begin{equation*}
  \varphi^+(\nu)=\frac{f_2-a_2}{a_2-\mu_0}\exp\left[\frac{1}{s}\int\limits^{f_2-a_2}_{\mu_0-a_2}\Psi(\zeta)d\zeta\right].
\end{equation*}

Thus in the case $\tau < \tau_1$,  \eqref{neq45}, \eqref{neq47}, and \eqref{neq51} yield
\begin{equation*}
\begin{array}{c}
  \ds{-\tau=s\ln\frac{f_2-a_2}{\lambda(\tau)-a_2}+\int\limits_{\lambda(\tau)-a_2}^{f_2-a_2}\Psi(\zeta)d\zeta,}\\
  \ds{-\tau=s\ln\frac{(a_2-a_3)^2}{(a_2-\mu_0)(a_2-\mu(\tau))}+\int\limits_{\mu(\tau)-a_2}^{a_3-a_2}\Psi(\zeta)d\zeta+\int\limits_{\mu_0-a_2}^{a_3-a_2}\Psi(\zeta)d\zeta}.
\end{array}
\end{equation*}
Excluding $\tau$ and passing to the limit as $\tau\to -\infty$, we find
\begin{equation*}
  \ds{\varphi^-(\nu)=\frac{(a_2-\mu_0)(f_2-a_2)}{(a_2-a_3)^2}\exp\left[\frac{1}{s}\left(\int\limits_{a_3-a_2}^{f_2-a_2} \Psi(\zeta)d\zeta+\int\limits_{a_3-a_2}^{\mu_0-a_2}\Psi(\zeta)d\zeta \right)\right]}.
\end{equation*}
Consequently the value
\begin{equation}\label{neq53}
  \ds{\theta=\varphi^+(\nu)\varphi^-(\nu)=\left(\frac{f_2-a_2}{a_2-a_3}\right)^2\exp\left[\frac{2}{s}\int\limits_{a_3-a_2}^{f_2-a_2}\Psi(\zeta)d\zeta\right]}
\end{equation}
does not depend on $\mu_0$ and is, therefore, the function only of the parameter $f$.

Suppose pseudo-trajectories $\nu_1(\tau)$ and $\nu_2(\tau)$ are chosen in such a way that, geometrically, $\nu_2$ can be the extension of $\nu_1$ in real time. Let us find out when it really happens. Consider, for example, a pseudo-trajectory $\nu_1$ going from $P_4$ to $P_1$ in the domain $\alpha_2>0$. Then $\nu_2$ must go from $P_1$ to $P_4$ in the domain $\alpha_2 < 0$. Obviously, $\nu_2$ is the extension of $\nu_1$ if and only if $\nu_1$ and $\nu_2$ have at the point $P_1$ a common tangent line. Substituting \eqref{neq23} and \eqref{neq40} into \eqref{neq8} and \eqref{neq9}, we get that at the umbilical points the admissible velocities form the circle
\begin{equation*}
  (\alpha_3\dot{\alpha}_1-\alpha_1\dot{\alpha}_3)^2+\dot{\alpha}_2^2=a_2^2(f_2-a_2).
\end{equation*}
It means that $\nu_1$ goes further into $\nu_2$ if and only if the following limits coincide:
$$
\ds{\psi^+(\nu_1)=\lim\limits_{\tau\to +\infty}\left(\alpha_3\frac{d\alpha_1}{dt}-\alpha_1\frac{d\alpha_3}{dt}\right)}
$$
calculated along $\nu_1(\tau(t))$ and
$$
\ds{\psi^-(\nu_2)=\lim\limits_{\tau\to -\infty}\left(\alpha_3\frac{d\alpha_1}{dt}-\alpha_1\frac{d\alpha_3}{dt}\right)}
$$
calculated along $\nu_2(\tau(t))$. Using \eqref{neq11} and \eqref{neq38}, we find
\begin{equation}\label{neq54}
\begin{array}{c}
  \ds{\psi^+(\nu_1)=a_2\sqrt{f_2-a_2}\;\frac{\varphi^+(\nu_1)-1}{\varphi^+(\nu_1)+1}, \qquad \psi^-(\nu_1)=a_2\sqrt{f_2-a_2}\;\frac{\varphi^-(\nu_1)-1}{\varphi^-(\nu_1)+1},}\\
  \ds{\psi^+(\nu_2)=a_2\sqrt{f_2-a_2}\;\frac{1-\varphi^+(\nu_2)}{1+\varphi^+(\nu_2)}, \qquad \psi^-(\nu_2)=a_2\sqrt{f_2-a_2}\;\frac{1-\varphi^-(\nu_2)}{1+\varphi^-(\nu_2)}.}
\end{array}
\end{equation}
Hence $\psi^-(\nu_2)=\psi^+(\nu_1)$ if and only if
\begin{equation}\label{neq55}
  \varphi^-(\nu_2)=1/\varphi^+(\nu_1).
\end{equation}

Any real trajectory $\nu(t)$ is a sequence of pseudo-trajectories
\begin{equation*}
  \ldots,\nu_{-n},\ldots\nu_{-1},\nu_0,\nu_1,\ldots, \nu_n,\ldots,
\end{equation*}
and according to \eqref{neq55}
\begin{equation}\label{neq56}
  \varphi^-(\nu_n)=1/\varphi^+(\nu_{n-1}), \qquad n=0,\pm1, \pm2, \ldots.
\end{equation}
From \eqref{neq53} and \eqref{neq55} we get
\begin{equation}\label{neq57}
  \ds{\varphi^-(\nu_n)=\frac{1}{\theta^n}\varphi^-(\nu_0),\qquad \varphi^+(\nu_n)=\theta^n\varphi^+(\nu_0).}
\end{equation}
Suppose that the considered trajectory is closed. Then for some $n>0$ we obtain $\varphi^+(\nu_n)=1/\varphi^-(\nu_0)$, and equalities \eqref{neq53} and \eqref{neq57} yield $\theta=1$. Note that in this case any two successive pseudo-trajectories already form a closed solution (see Fig.~\ref{fig_11}{\it  a}).

When $\theta>1$, relations \eqref{neq57} give that
$$
\lim\limits_{n\to +\infty}\varphi^+(\nu_n)=+\infty, \qquad \lim\limits_{n\to -\infty}\varphi^+(\nu_n)=0,
$$
therefore \eqref{neq54} yields
$$
\lim\limits_{n\to\pm\infty}\left|\psi^+(\nu_n)\right|=a_2\sqrt{f_2-a_2}.
$$
Consequently, the trajectory asymptotically approaches the periodic one as $t \to \pm \infty$. The similar phenomenon takes place when $\theta<1$. The difference here is that if $\theta>1$, then the sum of the inner angles at the ``basement'' of each pseudo-trajectory is greater than $\pi$, and if $\theta<1$ this sum is smaller than $\pi$ (see Fig.~\ref{fig_11}{\it  b} and Fig.~\ref{fig_11}{\it c}). The proposition is proved.\end{proof}

From the topological point of view, the distribution of admissible velocities in the disk \eqref{neq41} is the same as the distribution of admissible velocities in the domain $\{\alpha_1\gs 0\}$ of the case \eqref{neq35}. Therefore the trajectories in the phase space corresponding to the motions with $\alpha_1>0$ fill a surface homeomorphic to that shown in Fig.~\ref{fig_09}. However now the middle line of this surface is a periodic solution and all other trajectories behave in such a way that the first recurrence map induced by the phase flow on the eight-curve is either identity $(\theta = 1)$, or has a unique (hyperbolic) fixed point (see Fig.~\ref{fig_12}).

The surface of the same topology is filled by the trajectories covering the domain $\{\alpha_1< 0\}$. The manifold $\mfT_{h_2, f}$ in the case \eqref{neq23}, \eqref{neq40} has two components each of which is homeomorphic to the direct product of the eight-curve and a circle.

\begin{figure}[!ht]
\centering
\includegraphics[width=0.6\linewidth,keepaspectratio]{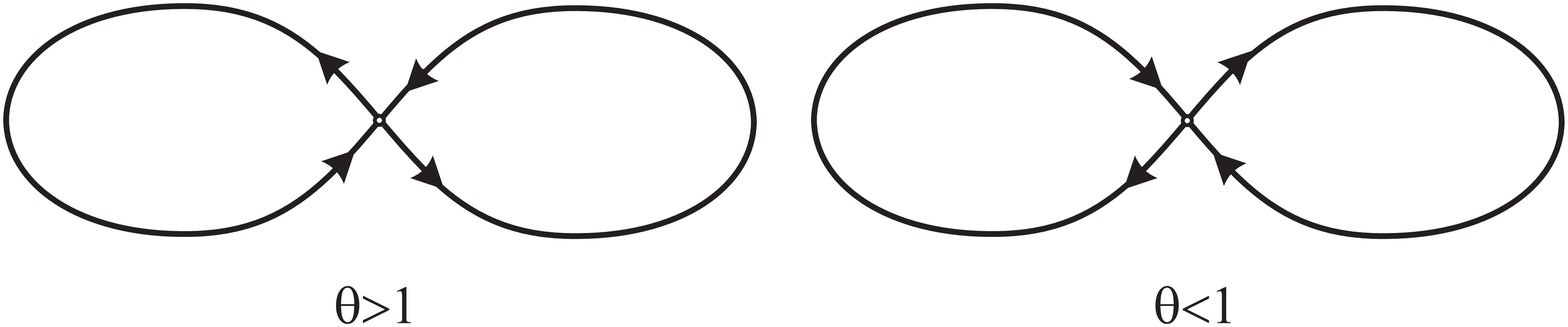}
\caption{}\label{fig_12}
\end{figure}

Let us consider the values \eqref{neq23} when
\begin{equation}\label{neq58}
  -a_3<f<+\infty, \qquad a_1<f_2<+\infty.
\end{equation}
From \eqref{neq26}, we obtain that the region of possible motions is the whole Poisson sphere.

\begin{propos}\label{prop5} Among the motions corresponding to the values \eqref{neq23} in the interval \eqref{neq58}, only two are periodic and these are the motions along the cross section $\alpha_2 = 0$ in two directions. All other trajectories cross by turn a pair of the opposite umbilical points $($see Fig.~$\ref{fig_13})$. The trajectories starting from one of the umbilical points at $t=0$ simultaneously join at the opposite umbilical point.
\end{propos}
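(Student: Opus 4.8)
The plan is to carry the phase-plane analysis of Propositions~\ref{prop2}--\ref{prop4} into the regime \eqref{neq58}, where $f_2>a_1$ makes both right-hand sides of \eqref{neq38} nonnegative on the whole rectangle \eqref{neq12}, vanishing only on $\lambda=a_2$ and on $\mu=a_2$, and nonzero at the coordinate boundaries $\lambda=a_1$ (the circle $\alpha_1=0$) and $\mu=a_3$ (the circle $\alpha_3=0$). First I would record the elementary monotonicity: $\lambda$ oscillates between the umbilical value $a_2$ and $a_1$, where the trajectory crosses $\alpha_1=0$ with nonzero speed and reflects, while $\mu$ oscillates between $a_2$ and $a_3$, crossing $\alpha_3=0$. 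The cross section $\alpha_2=0$ is, by \eqref{neq11}, the locus $\lambda=a_2$ or $\mu=a_2$; on it one of the equations \eqref{neq38} freezes while the other sweeps a vertex $(\pm1,0,0)$ or $(0,0,\pm1)$, and the coordinate arcs $\lambda=a_2$ and $\mu=a_2$ fit together along the whole great circle, meeting at the four umbilical points. Using \eqref{neq15} exactly as before, I would verify that this circle is traversed in finite real time through the umbilical points, which yields the two periodic orbits, one for each direction of traversal.

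Next, for a trajectory off $\alpha_2=0$ I would argue, as in Propositions~\ref{prop2} and \ref{prop3}, that the logarithmic divergence of the amended time forces $(\lambda,\mu)\to(a_2,a_2)$ as $\tau\to\pm\infty$, so every \pt\ \eqref{neq42} runs from one umbilical point to another. The decisive combinatorial point is that between these endpoints $\lambda$ performs exactly one excursion $a_2\to a_1\to a_2$ and $\mu$ exactly one excursion $a_2\to a_3\to a_2$; hence the trajectory crosses each of $\alpha_1=0$ and $\alpha_3=0$ once, both coordinate signs reverse, and $P_1$ is carried to its antipode $P_3$ (and $P_2$ to $P_4$). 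Since \eqref{neq15} has an integrable singularity at the umbilical points, these are crossed in finite real time rather than merely approached, so the real trajectory visits a fixed pair of opposite umbilical points by turn.

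The core of the statement is the simultaneity, and this is where I expect the main work. Eliminating $\xi,\eta$ in favour of $\lambda,\mu$ turns \eqref{neq38} and \eqref{neq14} into $d\lambda/d\tau=\pm\frac{\lambda-a_2}{\lambda}\sqrt{R(\lambda)}$ and $d\mu/d\tau=\pm\frac{a_2-\mu}{\mu}\sqrt{R(\mu)}$ with $R(s)=(a_1-s)(s-a_3)(f_2-s)$, so in amended time the two coordinates evolve as fixed profiles depending only on $f$, each unique up to a time shift; a given \pt\ from $P_1$ to $P_3$ is therefore singled out merely by the relative phase $c$ of these profiles. Setting $t=0$ at $P_1$, the arrival time at $P_3$ is
\[
T(c)=\frac12\int_{-\infty}^{+\infty}\frac{\lambda(\tau)-\mu(\tau+c)}{\lambda(\tau)\,\mu(\tau+c)}\,d\tau ,
\]
a convergent improper integral, because the numerator $\lambda-\mu$ tends to zero exponentially in $\tau$ while each factor in the denominator tends to $a_2$. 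Differentiating under the integral sign, the $c$-derivative of the integrand equals a total $\tau$-derivative of $-1/\mu(\tau+c)$, so $T'(c)$ reduces to the boundary values of $1/\mu$ at $\tau=\pm\infty$; both equal $1/a_2$ and cancel, whence $T'(c)\equiv0$. Thus $T$ is independent of the \pt, which is precisely the assertion that all trajectories leaving $P_1$ at $t=0$ reach $P_3$ at the common time $t=T$. The hard part is exactly this passage from the qualitative limit arguments of Propositions~\ref{prop2}--\ref{prop4} to a quantitative, phase-independent evaluation of the real travel time near the umbilical singularity of \eqref{neq15}.

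Finally, to see that no trajectory other than the two on $\alpha_2=0$ is periodic, I would glue successive pseudo-trajectories at the umbilical points by the common-tangent condition on the circle of admissible velocities, as in \eqref{neq50}--\eqref{neq55}, and track the induced first-return map on that circle. I expect this map to be a fixed rotation whose only directions compatible with closure are the degenerate ones lying on $\alpha_2=0$, so that every umbilical-crossing trajectory, although it meets $P_1$ and $P_3$ by turn at the equally spaced instants $t=nT$, fails to close and instead recurrently fills its integral set (see Fig.~\ref{fig_13}). The two $\alpha_2=0$ orbits then remain the only periodic motions, completing the proof.
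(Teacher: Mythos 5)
Your first three paragraphs are sound, and they are in fact more self-contained than the paper, which does not prove Proposition~\ref{prop5} internally: it only remarks that the proof ``is obtained from \eqref{neq38}, \eqref{neq58} by the same way as it is done in \cite{bib04}''. In particular, your simultaneity argument is a genuine replacement for that citation: since the integrand of $T(c)$ equals $\frac{1}{2\mu(\tau+c)}-\frac{1}{2\lambda(\tau)}$, its $c$-derivative is the exact $\tau$-derivative of $\frac{1}{2\mu(\tau+c)}$, so $T'(c)=\frac{1}{2a_2}-\frac{1}{2a_2}=0$; the exponential approach of $\lambda$ and $\mu$ to $a_2$ (double zero of the right-hand sides of \eqref{neq38} there) makes the integral convergent and the differentiation under the integral sign legitimate. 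The same exponential decay also justifies your finite-real-time crossing of the umbilical points and the single-excursion count giving antipodal pairs $P_1,P_3$ and $P_2,P_4$.

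The genuine gap is your last paragraph, i.e., precisely the claim that only the two motions along $\alpha_2=0$ are periodic. The first-recurrence map on the circle of admissible velocities at an umbilical point is \emph{not} a rotation. Running the gluing analysis of Proposition~\ref{prop4} in this regime --- which is exactly what the paper does right after Proposition~\ref{prop5} --- every pseudo-trajectory $\nu$ satisfies $\varphi^+(\nu)\,\varphi^-(\nu)=\theta$ with $\theta=\left(\frac{a_1-a_2}{a_2-a_3}\right)^2\exp\left[\frac{2}{s}\int_{a_3-a_2}^{a_1-a_2}\Psi(\zeta)\,d\zeta\right]$ depending only on $f$ (now with $f_2^*>a_1^*$ in \eqref{neq52}), and the common-tangent matching at an umbilical point gives $\varphi^-(\nu_n)=1/\varphi^+(\nu_{n-1})$; hence $\varphi^+(\nu_n)=\theta^n\varphi^+(\nu_0)$, so in the coordinate $\varphi^+\in(0,\infty)$ the return map is multiplication by $\theta$. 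If $\theta=1$ it is the identity and \emph{every} umbilical-crossing trajectory closes after one revolution (two pseudo-trajectories), so all of them are periodic and the claim fails; if $\theta\neq 1$ none of them closes, but they do not ``recurrently fill'' anything either: by \eqref{neq54} their crossing directions tend to the degenerate ones, so each trajectory asymptotically approaches the two $\alpha_2=0$ periodic motions as $t\to\pm\infty$ (this is the paper's Fig.~\ref{fig_14}: identity for $\theta=1$, two hyperbolic fixed points for $\theta\neq1$). Thus your expected picture is wrong in both cases, and the literal first claim of the proposition is equivalent to $\theta(f)\neq1$ --- a condition that neither you nor the paper verifies; the paper itself hedges immediately after the proposition (``If $\theta=1$, then all trajectories are closed\ldots''). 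To close your proof you would have to compute $\theta$ for $f_2>a_1$ and show it never equals $1$, or else restate the first claim of the proposition as this dichotomy.
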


\begin{figure}[ht]
\begin{center}
\begin{minipage}[h]{0.4\linewidth}
\centering
\includegraphics[width=0.75\linewidth,keepaspectratio]{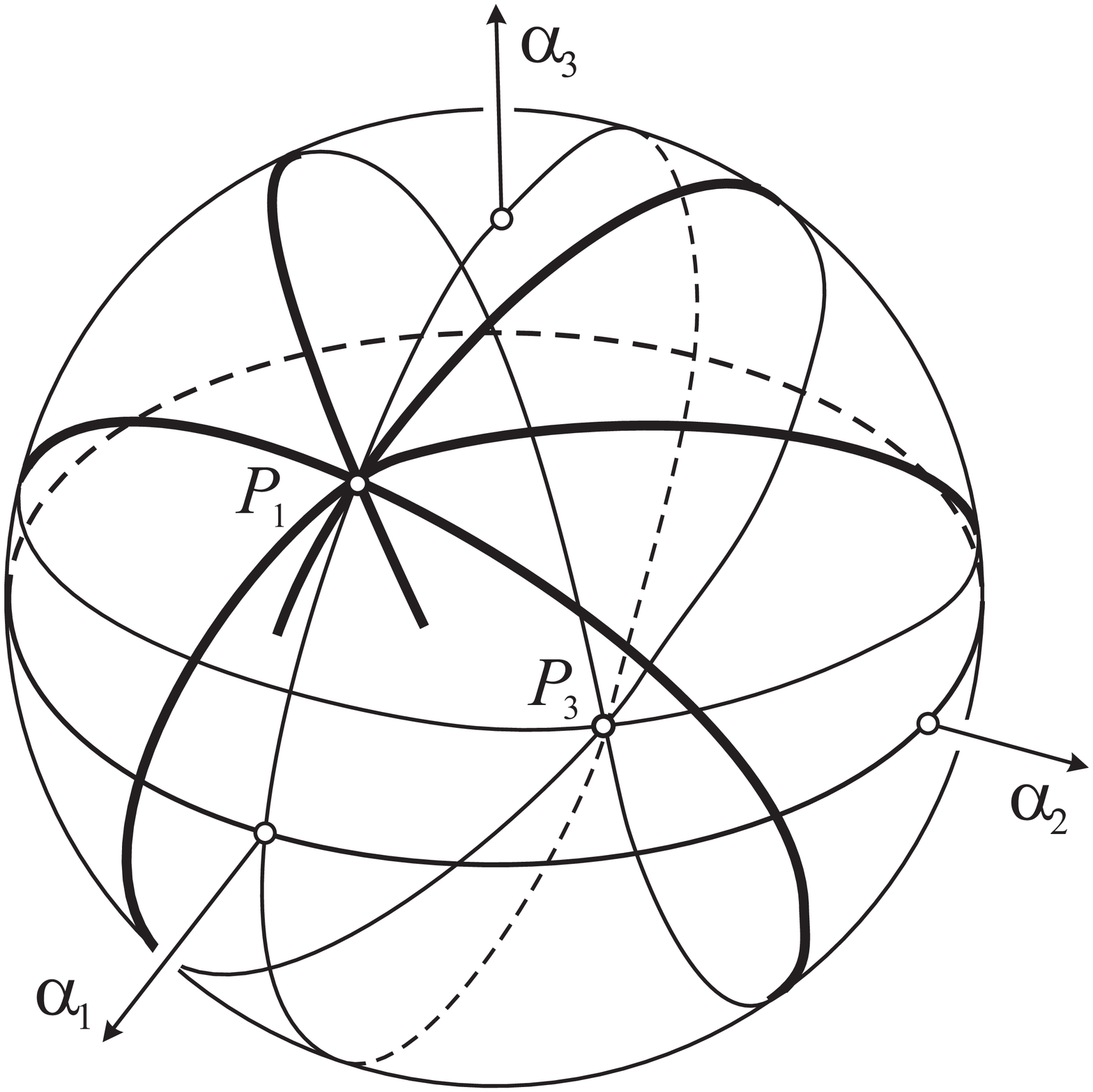}
\caption{}\label{fig_13}
\end{minipage}
\hspace{5mm}
\begin{minipage}[h]{0.4\linewidth}
\centering
\includegraphics[width=0.75\linewidth,keepaspectratio]{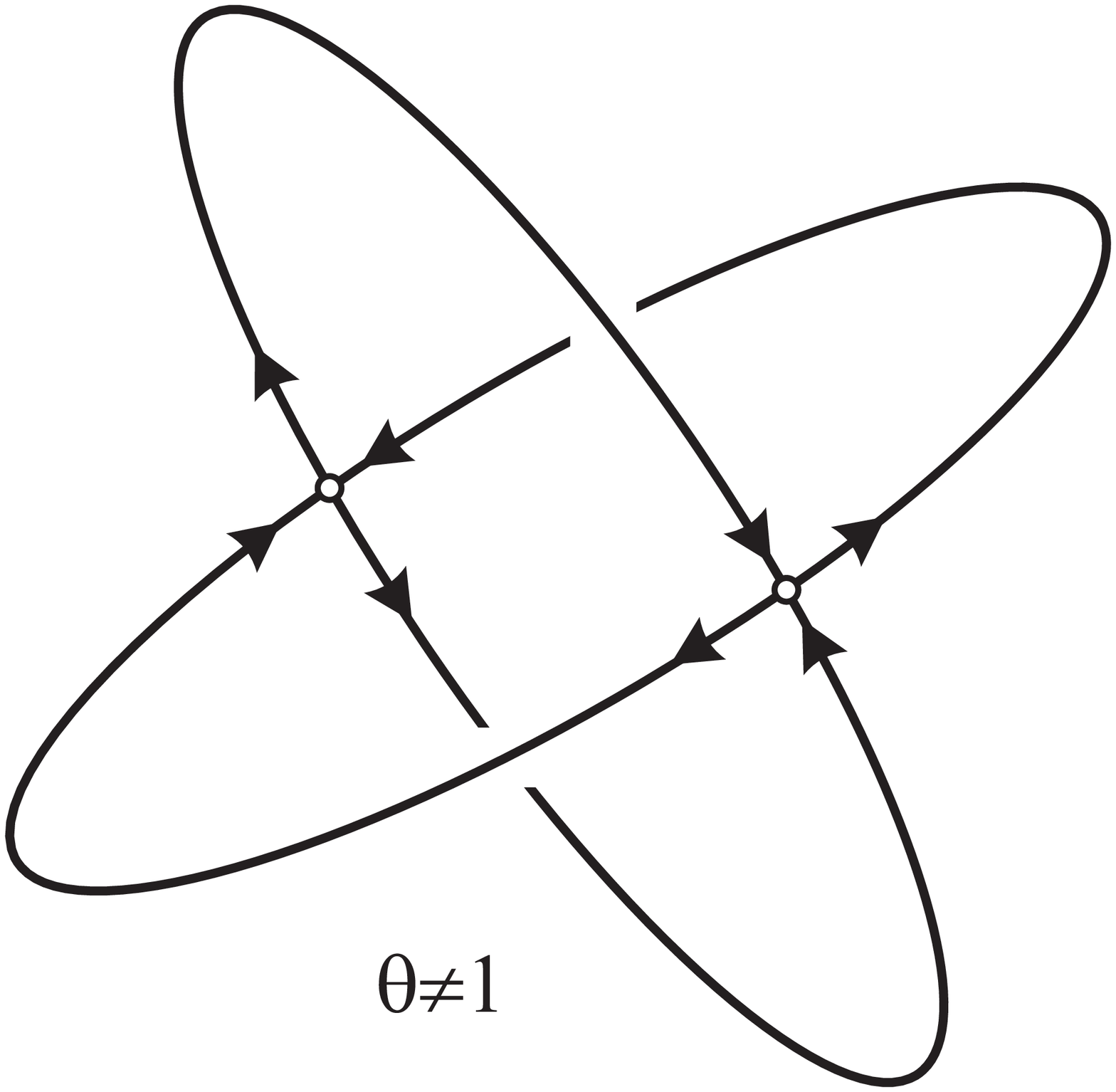}
\caption{}\label{fig_14}
\end{minipage}
\end{center}
\end{figure}

This case is an analog of a similar case in the problem of the inertial motion of a rigid body \cite{bib04}, and the proof of Proposition~\ref{prop5} is obtained from \eqref{neq38}, \eqref{neq58} by the same way as it is done in \cite{bib04}. This coincidence is easily explained. The constants $h$ and $f$ of the quadratic integrals are big enough to speak of the ``fast'' rotations of the body, therefore the force function does not influence the qualitative picture of motion.

To find out if the trajectories are closed or not, as before, we have to estimate the value (in the same notation)
\begin{equation*}
  \ds{\theta=\left(\frac{a_1-a_2}{a_2-a_3}\right)^2\exp\left[\frac{2}{s} \int\limits_{a_3-a_2}^{a_1-a_2}\Psi(\zeta)d\zeta\right]}.
\end{equation*}
Only now in the function \eqref{neq52} we have $f_2^*>a_1^*$. If $\theta=1$, then all trajectories are closed, and if $\theta \neq 1$, then each trajectory asymptotically approaches one of the periodic motions mentioned in Proposition~\ref{prop5} as $t \to +\infty$ and another one as $t \to -\infty$.

The motions crossing $P_1$ and $P_3$ form, in the phase space, a two-dimensional torus $T_1$, and the motions crossing $P_2$ and $P_4$ give a two-dimensional torus $T_2$. The tori $T_1$ and $T_2$ intersect by two circles corresponding to the motions along the cross section $\alpha_2 = 0$. Thus, the set $\mfT_{h_2, f}$ for the values \eqref{neq58} is homeomorphic to the set shown in Fig.~\ref{fig_04}{\it c}. On the cross section of $\mfT_{h_2, f}$ transversal to the common directrices of the tori $T_1$ and $T_2$ the first recurrence map is defined. It is the identity for $\theta=1$ and has two hyperbolic fixed points for при $\theta \neq 1$ (see Fig.~\ref{fig_14}).

It remains to investigate the parabolic segment \eqref{neq25}, which according to \eqref{neq27} can be written as
\begin{equation}\label{neq59}
  h=f_0^2, \qquad a_2<f_0<a_1.
\end{equation}
Substituting this into \eqref{neq16}, we get
\begin{equation*}
  \ds{\left(\frac{d\xi}{d\tau}\right)^2=-\frac{(f_0-\lambda)^2}{\lambda},\qquad \left(\frac{d\eta}{d\tau}\right)^2=\frac{(f_0-\mu)^2}{\mu},}
\end{equation*}
so the trajectories corresponding to \eqref{neq59} are closed parametric curves $\lambda=f_0$ with different directions of motion. The manifold $\mfT_{h_0, f}$ consists of four isolated circles. These circles are pairwise identified when $f_0=a_2$ and $f_0=a_1$ (compare with \eqref{neq37}, \eqref{neq39}).

{\bf Non-critical motions.} According to Proposition~\ref{prop1} in the case when $(h, f)$ belongs to one of the open regions I -- IV (see Fig.~\ref{fig_01}), the manifold \eqref{neq18} consists of several two-dimensional tori with quasi-periodic motions on it; for the points outside the region bounded by the bifurcation diagram the manifold \eqref{neq18} is empty. Let us show what are the types of trajectories on the Poisson sphere of the unit vertical vector $m_1$ in these cases.

Denote by $\zeta_1<\zeta_2$ the solutions of the quadratic equation $\zeta^2-2f_0\zeta+h=0$. Then due to \eqref{neq26} and \eqref{neq27} for any pair $(h,f)$ the region of possible motions has the form
\begin{equation}\label{neq60}
  \tilde{M}_{h,f}=\{\zeta_1\ls\lambda\ls\zeta_2\}\cap(\{\mu\ls\zeta_2\}\cup\{\mu\gs\zeta_2\}).
\end{equation}

Suppose the point $(h,f)$ is in Region I. Then $a_3<\zeta_1<a_2<\zeta_2<a_1$ and according to inequalities \eqref{neq12} the region \eqref{neq60} has two connected components diffeomorphic to the square (see Fig.~\ref{fig_15}):
\begin{equation}\label{neq61}
  \tilde{M}_{h,f}=\{\lambda\ls\zeta_2, \mu\ls\zeta_1\}.
\end{equation}
The corresponding trajectories are superposition of two independent oscillations and fill \eqref{neq61} similar to Lissajous curves. The manifold \eqref{neq18} also has two connected components.

\begin{figure}[ht]
\begin{center}
\begin{minipage}[h]{0.4\linewidth}
\centering
\includegraphics[width=0.75\linewidth,keepaspectratio]{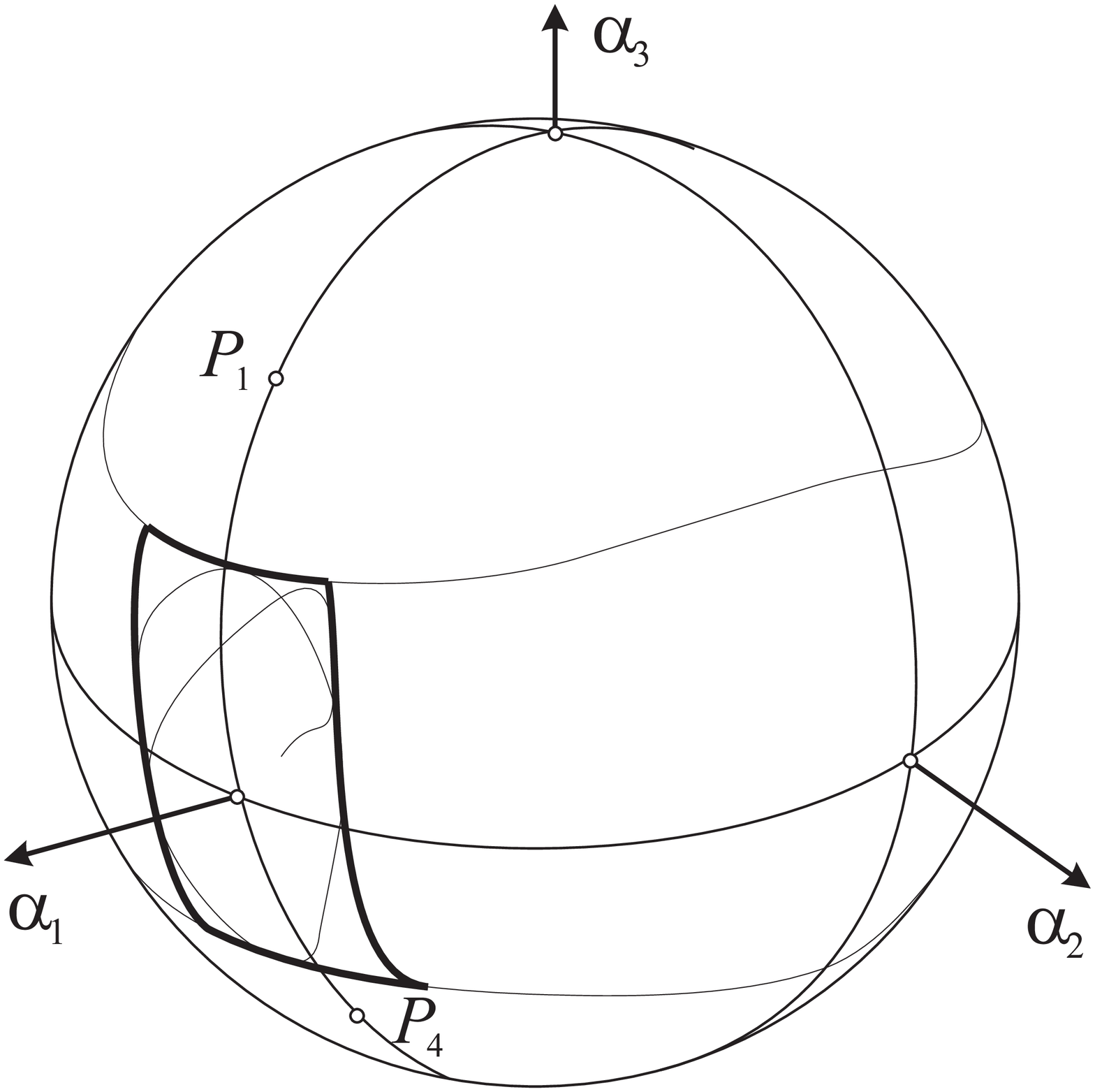}
\caption{}\label{fig_15}
\end{minipage}
\hspace{5mm}
\begin{minipage}[h]{0.4\linewidth}
\centering
\includegraphics[width=0.75\linewidth,keepaspectratio]{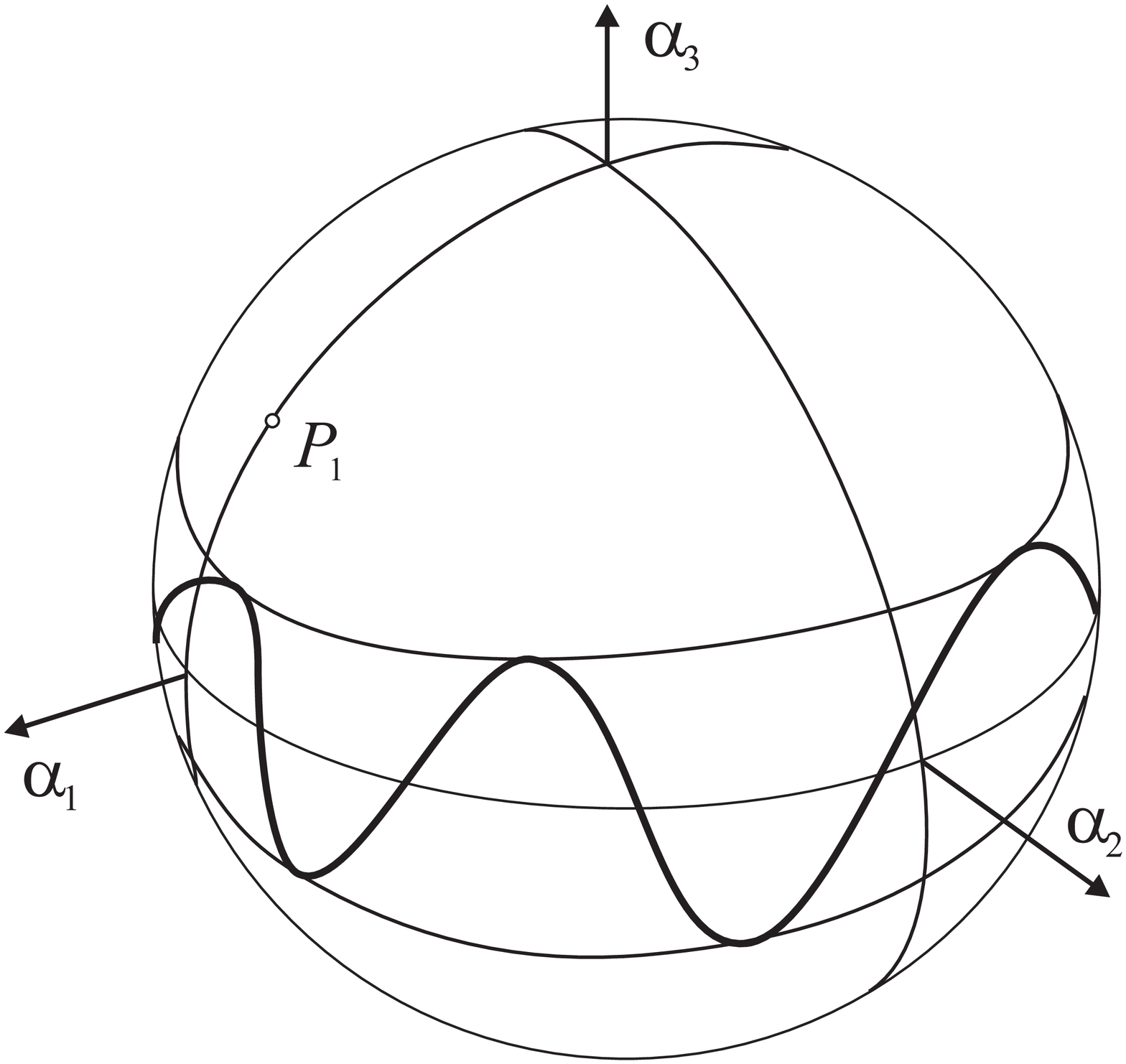}
\caption{}\label{fig_16}
\end{minipage}\\[3mm]
\begin{minipage}[h]{0.4\linewidth}
\centering
\includegraphics[width=0.75\linewidth,keepaspectratio]{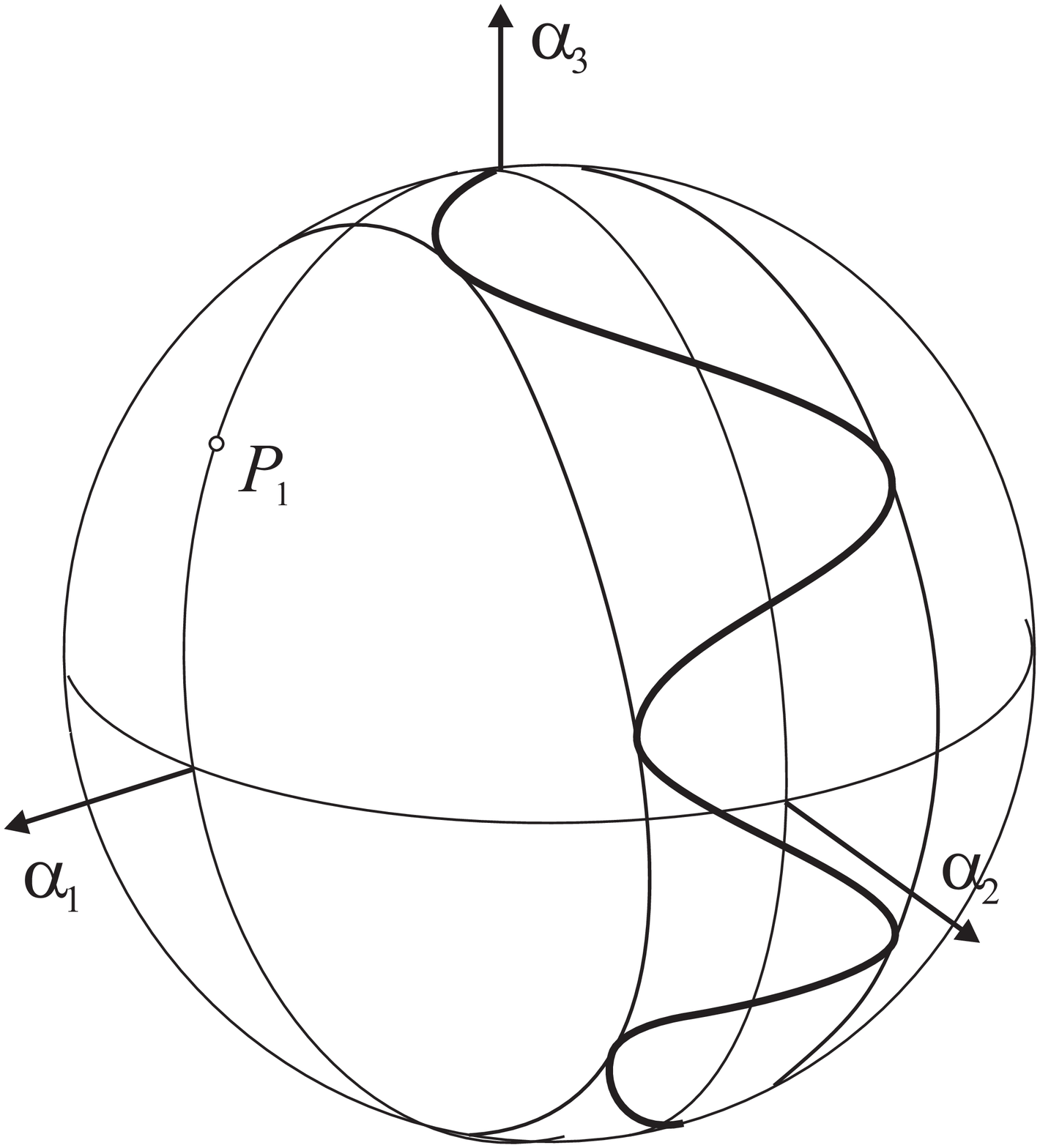}
\caption{}\label{fig_17}
\end{minipage}
\hspace{5mm}
\begin{minipage}[h]{0.4\linewidth}
\centering
\includegraphics[width=0.75\linewidth,keepaspectratio]{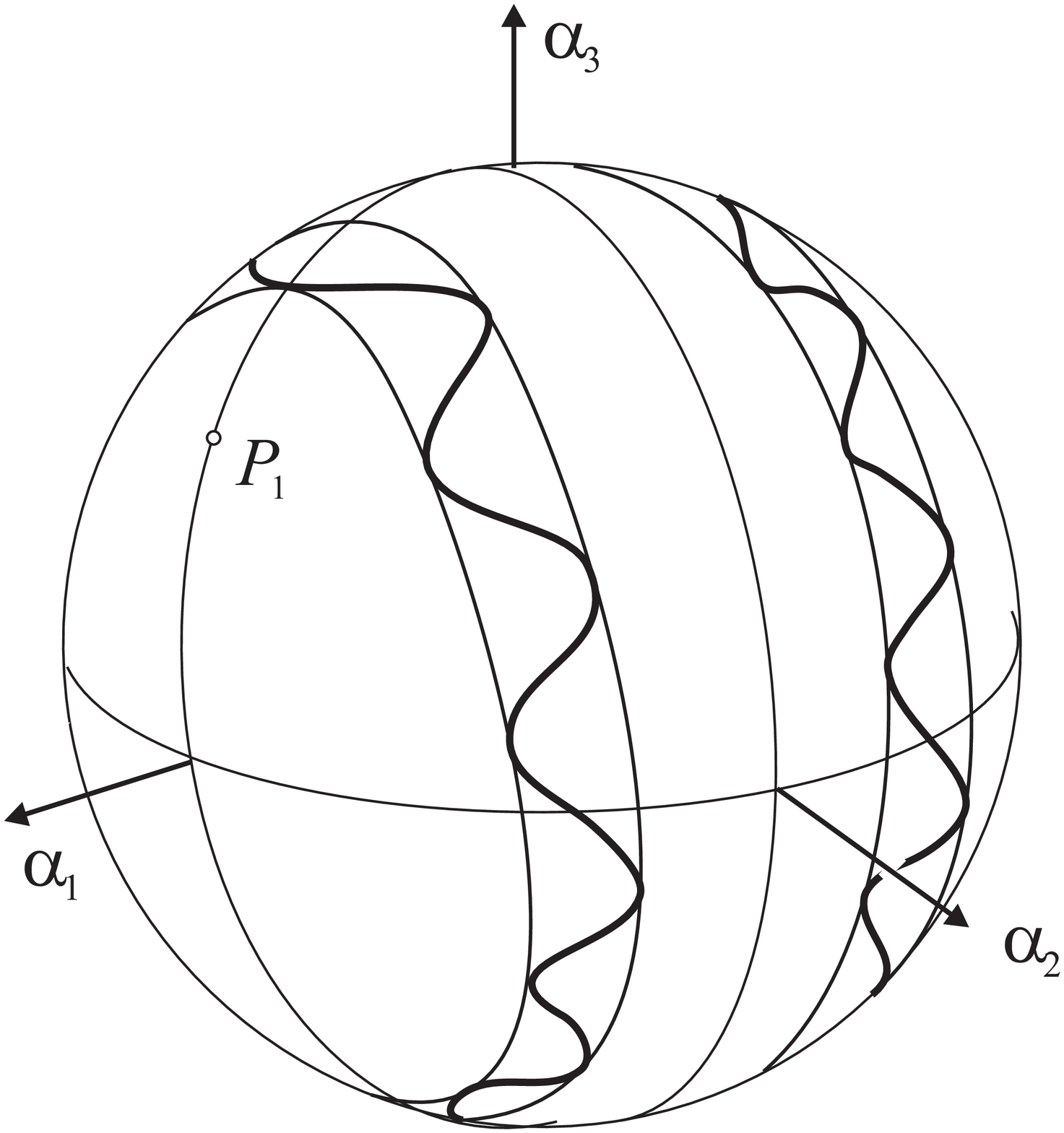}
\caption{}\label{fig_18}
\end{minipage}
\end{center}
\end{figure}

If $(h,f)$ belongs to Region II, then $\zeta_1$ still lies in the interval $(a_3, a_2)$, but $\zeta_2 > a_1$, therefore the region \eqref{neq60} is a ring
\begin{equation}\label{neq62}
  \tilde{M}_{h,f}=\{\mu\ls\zeta_1\},
\end{equation}
the trajectories in which go in both directions oscillating between its boundaries (see Fig.~\ref{fig_16}). When crossing the common border of Regions I and II the squares \eqref{neq61} glue together by ``vertical'' sides forming the ring \eqref{neq62}.

In the intermediate case, as it was shown before, this gluing generates limit cycles  (see Fig.~\ref{fig_03}).

In the case when $(h, f)$ is in Region III, the root $\zeta_1$ passes into the interval $(a_2, a_1)$, therefore the region \eqref{neq60} is a ring
\begin{equation}\label{neq63}
  \tilde{M}_{h,f}=\{\lambda\gs\zeta_1\},
\end{equation}
having the cross section $\alpha_1=0$ as its middle line. The motion in \eqref{neq63} is of the same type as in the previous case (see Fig.~\ref{fig_17}). The transformation of the region \eqref{neq60} when crossing the common border of Regions II and III is as follows: the segments of the cross section $\alpha_2=0$ between the points $P_1$, $P_2$ and $P_3$, $P_4$ are glued together, then we cut the segments of the same cross section between the points $P_1$, $P_4$ and $P_2$, $P_3$. In the cases \eqref{neq62} and \eqref{neq63} the manifold \eqref{neq18} has two connected components.

Finally let $(h, f)$ lie in Region IV. We have $a_2<\zeta_1<\zeta_2<a_1$, so the set \eqref{neq60} consists of two rings
\begin{equation}\label{neq64}
  \tilde{M}_{h,f}=\{\zeta_1\ls\lambda\ls\zeta_2\}.
\end{equation}
The motions in them occur in two directions with oscillation between the boundaries (see Fig.~\ref{fig_18}). The manifold \eqref{neq18} consists of four two-dimensional tori.

The pass from Region III into Region IV is trivial; the set \eqref{neq64} is obtained from \eqref{neq63} by cutting along the middle line. More interesting transformation takes place when crossing the common border of Regions I and IV. The upper and lower sides of the squares \eqref{neq61} are folded twice, thus, for example, the square \eqref{neq61} in the hemisphere $\alpha_1 > 0$ transforms into a disk containing the points $P_1$ and $P_4$, which after cutting along the segment of the cross section $\alpha_2=0$ between the points $P_1$ and $P_4$ generates one of the rings forming the set~\eqref{neq64}.

\end{document}